
\documentclass[]{interact}

\usepackage{epstopdf}
\usepackage[caption=false]{subfig}


\usepackage[natbibapa,nodoi]{apacite}
\setlength\bibhang{12pt}

\theoremstyle{plain}
\newtheorem{theorem}{Theorem}[section]
\newtheorem{lemma}[theorem]{Lemma}

\theoremstyle{definition}

\theoremstyle{remark}
\newtheorem{remark}{Remark}

\begin{document}

\articletype{ARTICLE TEMPLATE}

\title{Handling actuator magnitude and rate saturation in uncertain over-actuated systems: A modified projection algorithm approach}

\author{
\name{Seyed Shahabaldin Tohidi\textsuperscript{a}\thanks{CONTACT S.~S. Tohidi. Email: shahabaldin@bilkent.edu.tr} and Yildiray Yildiz\textsuperscript{a}\thanks{CONTACT Y. Yildiz. Email: yyildiz@bilkent.edu.tr}}
\affil{\textsuperscript{a}Faculty of Mechanical Engineering, Bilkent University, Cankaya, Ankara 06800, Turkey}
}

\maketitle

\begin{abstract}
This paper proposes a projection algorithm which can be employed to bound actuator signals, in terms of both magnitude and rate, for uncertain systems with redundant actuators. The investigated closed loop control system is assumed to contain an adaptive control allocator to distribute the total control input among actuators. Although conventional control allocation methods can handle actuator rate and magnitude constraints, they cannot consider actuator uncertainty. On the other hand, adaptive allocators manage uncertainty and actuator magnitude limits. The proposed projection algorithm enables adaptive control allocators to handle both magnitude and rate saturation constraints. A mathematically rigorous analysis is provided to show that with the help of the proposed projection algorithm, the performance of the adaptive control allocator can be guaranteed, in terms of error bounds. Simulation results are presented, where the Aero-Data Model In Research Environment (ADMIRE) is used as an over-actuated system, to demonstrate the effectiveness of the proposed method.
\end{abstract}

\begin{keywords}
Projection algorithm; adaptive systems; actuator saturation; control allocation
\end{keywords}

\section{Introduction}

Actuator constraints such as magnitude and rate limits play a prominent role in advanced control systems.
These limits induce nonlinear behavior which may lead to performance degradation, occurrence of limit cycles, multiple equilibria, and even instability \citep{Khal02,TarGar11}. Actuator rate limits, specifically, introduce phase lags, which act as time delays, that can lead to persistent undesired oscillations called Pilot Induced Oscillations (PIO) \citep{TohYil18,QueTar17,YilKol10,Yil11a,Yil11b,YilKolAco11,AcoYil14}. These oscillations generally occur due to an abnormal coupling between the pilot and the aircraft, instigated by various factors such as high pilot gains, actuator rate saturation and control mode switches \citep{Mcr95}.

For systems with uncertainties, various adaptive controllers that account for actuator magnitude limits exist in the literature \citep{Gru19, KarAnnas93, LavHov07a, LavHov07b}. There are also adaptive approaches related to the problem of handling actuators that are constrained in both magnitude and rate. In the paper by \cite{YonFra14}, the approach presented by \cite{LavHov07a} and \cite{LavHov07b} is extended for systems with rate and magnitude limits. In the method proposed by \cite{LeonHad09}, the reference inputs as well as the control signals are modified adaptively in order to guarantee the stability in the presence of magnitude and rate limits. In a recent work by \cite{GauAnnas19}, plant dynamics is augmented with the actuator dynamics, and an adaptive controller is introduced to compensate the effect of actuator magnitude and rate limits.

With the reduction of actuator costs due to advances in microprocessors, and with the help of actuator miniaturization, the utilization of redundant actuators have been growing in recent years. Actuator redundancy can improve the performance, maneuverability and the ability to tolerate system faults. The process of distributing control signals among redundant actuators is performed by control allocation. A study on control allocation that considers actuator magnitude constraints is conducted by \cite{Dur93} by using direct allocation method. Daisy chain control allocation method, which handles actuator magnitude limit, is employed by \cite{Buf97}. Actuator magnitude saturation of an unmanned underwater vehicle is considered using pseudo inverse based control allocation (\cite{Omer07}). An iterative approach based on the null space of the control matrix is proposed by \cite{Toh16}, which handles actuator magnitude limits. Optimization based control allocation is one of the most common methods of accounting for actuator magnitude and rate constraints \citep{PetBod06, Har05, Har02, Joh08, Yil11a, Yil11b, Safa19}. A sequential algorithm to solve optimization based control allocation is proposed by \cite{Nas17}. A survey on control allocation methods can be found in the study conducted by \cite{JohFos13}. A recent control allocation study is presented by \cite{NadSed19}, where model predictive control is employed to handle actuator magnitude constraints.


When a system has uncertain dynamics, together with redundant actuators, it is natural to consider an adaptive control allocator to achieve the task of distributing the total control effort among actuators. There exits few approaches presented in the literature that addresses the topic of adaptive control allocation. The method proposed by \cite{TjoJoh08} reduces the difference between virtual and actual control signals, and guarantees that the control signals ultimately converge to an optimal set. An adaptive control allocation for a hexacopter system is proposed by \cite{FalHol16}. A model reference adaptive control allocation structure is proposed by \cite{TohYil16}. This method is also extended to handle actuator magnitude limits \citep{TohYil17, TohYil19, TohYil20}.  

Projection algorithm is an appealing approach in robust adaptive control design. Restricting adaptive parameters while ensuring the stability of the closed loop system, simultaneously, is a prominent benefit of employing this algorithm in adaptive systems. It is noted that existing projection algorithms (\cite{Pra91, Lav13}) bound adaptive parameters' magnitudes and thus do not have a straightforward utility to handle actuator rate limits.
In this paper, we propose a projection algorithm that can be used in adaptive control allocation implementations, where actuators are both magnitude and rate limited. Therefore, the contribution of this paper is a projection algorithm that can handle magnitude and rate limited redundant actuators for systems with uncertain dynamics, where a control allocator is utilized in the controller structure. We show that, the existence and uniqueness of the solution of the differential equation describing the proposed projection algorithm can be guaranteed. Furthermore, we provide a performance guarantee, in terms of error bounds, for the exploited adaptive control allocation, which is possible thanks to the proposed projection algorithm. 

To summarize, we propose an answer to this question: ``How can we modify the conventional projection algorithm, so that we can employ it in adaptive control allocation implementations where actuators are both magnitude and rate saturated?" To the best of our knowledge, this question is not answered earlier. It needs to be emphasized that a control allocator is not a controller and cannot be replaced as a controller. The duty of the control allocation is distributing the controller signal, or the total control input, among redundant actuators. The method proposed in this paper is for the systems where an adaptive control allocator is used in the loop. We are not proposing a new controller or a new control allocation method. 


This paper is organized as follow. Notations used throughout the paper and the conventional, element-wise projection algorithm and its properties are given in Section \ref{notation}. Section \ref{sec4} presents the uncertain over-actuated system along with the adaptive control allocation utilizing the conventional projection algorithm. The proposed modified projection algorithm and its characteristics are presented in Section \ref{sec3}. The ADMIRE model is used in Section \ref{sec6} to illustrate the effectiveness of the proposed methodology in the simulation environment. Finally, a summary is given in Section
\ref{sec7}.


\section{Notations and preliminaries}\label{notation}

Throughout this work, $ \mathbb{R} $ is the set of real numbers, $ \mathbb{R}^+ $ is the set of positive real numbers, $ \mathbb{R}^m $ is a column vector with $ m $ real elements and $ \mathbb{R}^{m\times n} $ is an $ m\times n $ matrix of real elements.  $ ||.|| $ refers to the Euclidean norm for vectors and induced 2-norm for matrices, and $ ||.||_F $ refers to the Frobenius norm. $ I_r $ is the identity matrix of dimension $ r\times r $, $ 0_{r\times n} $ is the zero matrix of dimension $ r\times n $, and $\text{tr}(.)$ refers to the trace operation. The over-dot notation will be used for time derivatives only, i.e. $ \dot{(\cdot)} = d(\cdot)/dt $.

Consider $ Y\in \mathbb{R}^{r \times m}$ and ${\theta}_v\in \mathbb{R}^{r\times m}$. 
The element-wise projection operator $ \text{Proj}(.,.):\mathbb{R}\times \mathbb{R}\rightarrow \mathbb{R} $ is defined as
\begin{align} \label{eq:17n}
\text{Proj}(\theta_{v_{i,j}}, Y_{i,j})\equiv \left \{\begin{array}{l}Y_{i,j}-Y_{i,j}f_{i,j}\ \ \ \ \text{if}\ f_{i,j}>0 \ \ \ \& \ \ Y_{i,j}( \frac{df_{i,j}}{d\theta_{v_{i,j}}}) >0\\  Y_{i,j} \ \ \ \ \ \ \ \ \ \ \ \ \ \ \ \ \text{otherwise},\end{array}\right.
\end{align}
where $\theta_{v_{i,j}}$ and $Y_{i,j}$ refer to the element in the  $i^{\text{th}}$ row and $ j^{\text{th}} $ column of $\theta_{v}$ and $Y$, respectively, and where $f_{i,j}(.):\mathbb{R}\rightarrow \mathbb{R}$ is a convex and continuously differentiable function defined as
\vspace{-0.25cm}
\begin{equation} \label{eq:18n}
f_{i,j}=f(\theta_{i,j})=\frac{(\theta_{v_{i,j}}-\theta_{min_{i,j}}-\zeta_{i,j})(\theta_{v_{i,j}}-\theta_{max_{i,j}}+\zeta_{i,j})}{(\theta_{max_{i,j}}-\theta_{min_{i,j}}-\zeta_{i,j})\zeta_{i,j}},
\end{equation}
where $ \zeta_{i,j}\in \mathbb{R}^+ $ is the projection tolerance of $\theta_{v_{i,j}}$ such that $ \zeta_{i,j}<0.5(\theta_{max_{i,j}}-\theta_{min_{i,j}}) $, $\theta_{max_{i,j}}-\zeta_{i,j}>0$ and $\theta_{min_{i,j}}+\zeta_{i,j}<0$. $\theta_{max_{i,j}}>0$ and $\theta_{min_{i,j}}<0$ are the upper and lower bounds of the $ (i,j)^{\text{th}} $ element of $\theta_{v}$. Therefore, the projection operator $ \text{Proj}(\theta_{v}, Y) $ operates on the elements of $ \theta_{v} $ and $ Y $ using (\ref{eq:17n}) and (\ref{eq:18n}).

 The following lemmas are useful in proving the main theorems where projection algorithm is used (\cite{Lav13, Pra91, NarAnn12}).
\begin{lemma}\label{lem1}
	If an adaptive algorithm with adaptive law $\dot{\theta}_{v_{i,j}}=\text{Proj}(\theta_{v_{i,j}},Y_{i,j})$ and initial conditions $\theta_{v_{i,j}}(0)\in \Omega_{i,j}=\{\theta_{v_{i,j}}\in \mathbb{R}|f(\theta_{v_{i,j}})\leq 1\}$, where $f(\theta_{v_{i,j}}):\mathbb{R}\rightarrow \mathbb{R}$ is defined as in (\ref{eq:18n}), then $\theta_{v_{i,j}}\in \Omega_{i,j}$ for $\forall t\geq 0$.
\end{lemma}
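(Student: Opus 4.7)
The plan is to interpret $\Omega_{i,j}$ as the sublevel set $\{f_{i,j}\le 1\}$ of the convex quadratic in (\ref{eq:18n}) and to show that the projected flow never crosses its boundary. A direct substitution gives $f(\theta_{max_{i,j}})=f(\theta_{min_{i,j}})=1$, so this boundary consists of only the two endpoints, and the lemma reduces to proving forward invariance of $\{f_{i,j}\le 1\}$ under the scalar ODE $\dot\theta_{v_{i,j}}=\text{Proj}(\theta_{v_{i,j}},Y_{i,j})$.

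First I would take $V(t)=f(\theta_{v_{i,j}}(t))$ as a Lyapunov-like certificate and compute $\dot V=(df_{i,j}/d\theta_{v_{i,j}})\,\dot\theta_{v_{i,j}}$ under the projected law, then split according to the two cases in (\ref{eq:17n}). In the upper branch, where $f_{i,j}>0$ and $Y_{i,j}(df_{i,j}/d\theta_{v_{i,j}})>0$, we have $\dot\theta_{v_{i,j}}=Y_{i,j}(1-f_{i,j})$, which vanishes as $f_{i,j}\to 1$, giving $\dot V=0$ on the boundary. In the ``otherwise'' branch $\dot\theta_{v_{i,j}}=Y_{i,j}$; whenever $f_{i,j}=1$ this branch is only triggered if $Y_{i,j}(df_{i,j}/d\theta_{v_{i,j}})\le 0$, which yields $\dot V=Y_{i,j}(df_{i,j}/d\theta_{v_{i,j}})\le 0$. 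Hence on $\{f_{i,j}=1\}$ the vector field is tangent to or points inward from $\Omega_{i,j}$.

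To lift this boundary inequality to global forward invariance I would argue by contradiction. Suppose $V$ exceeds $1$ at some $t_1>0$; by continuity of $V$ and the hypothesis $V(0)\le 1$, there is a first crossing time $t_0\in[0,t_1)$ with $V(t_0)=1$ and $V(t)>1$ on a right-neighborhood of $t_0$. The boundary analysis rules out $\dot V(t_0)>0$, so any such escape would require $\dot V(t_0)=0$ together with $V$ actually rising. In the upper branch this is blocked because the barrier factor $(1-f_{i,j})$ makes $\theta_{v_{i,j}}=\theta_{max_{i,j}}$ (and symmetrically $\theta_{min_{i,j}}$) a local equilibrium of the scalar dynamics while the switching condition $Y_{i,j}(df_{i,j}/d\theta_{v_{i,j}})>0$ stays active; in the complementary branch, the sign of $Y_{i,j}$ forces $\dot\theta_{v_{i,j}}$ to point back into $\Omega_{i,j}$.

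The hard part will be precisely this final step, because the inequality $\dot V\le 0$ is only weak and the right-hand side of the projected ODE switches branches exactly at the boundary. A clean way to close the argument is to introduce the nonnegative quantity $W(t)=\max\{0,V(t)-1\}$ and verify that its upper-right Dini derivative is nonpositive wherever $W>0$, which together with $W(0)=0$ forces $W\equiv 0$ and hence $\theta_{v_{i,j}}(t)\in\Omega_{i,j}$ for every $t\ge 0$, as claimed.
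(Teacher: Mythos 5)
Your proposal is correct, and it is essentially the standard argument: the paper itself offers no proof of Lemma~\ref{lem1} beyond citing \cite{Lav13}, and your computation of $\dot f_{i,j}$ branch by branch, the observation that $\dot f_{i,j}\le 0$ on and beyond the level set $f_{i,j}=1$, and the resulting forward-invariance conclusion reproduce exactly the proof in that reference. It is also the same template the authors themselves use for the modified operator in Lemma~\ref{lem5}, so nothing further is needed.
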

\begin{proof}
	The proof of Lemma \ref{lem1} can be found in \cite{Lav13}.
\end{proof}
\begin{lemma} \label{lem2}
	Let $ \theta_{v_{i,j}}^*\in [\theta_{min_{i,j}}+\zeta_{i,j},\ \theta_{max_{i,j}}-\zeta_{i,j}] $, and consider the projection algorithm in (\ref{eq:17n}) with convex function (\ref{eq:18n}), the following inequality holds:
	\begin{equation}\label{eq:e20xx}
	\begin{array}{ll}
	tr( ({\theta}_v^T-{{\theta}_v^*}^T) ( -Y + \text{Proj}(\theta_v, Y) ) ) \leq 0.
	\end{array}
	\end{equation}
\end{lemma}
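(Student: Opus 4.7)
The plan is to reduce the matrix inequality in (\ref{eq:e20xx}) to a sum of scalar inequalities via the element-wise structure of the projection operator, and then dispatch each scalar term by the case analysis of (\ref{eq:17n}). Because $\text{Proj}$ acts componentwise on $\theta_v$ and $Y$, the trace expands as
\begin{equation*}
\text{tr}\bigl((\theta_v^T-{\theta_v^*}^T)(-Y+\text{Proj}(\theta_v,Y))\bigr)=\sum_{i,j}(\theta_{v_{i,j}}-\theta_{v_{i,j}}^*)\bigl(-Y_{i,j}+\text{Proj}(\theta_{v_{i,j}},Y_{i,j})\bigr),
\end{equation*}
so it suffices to show each summand is non-positive.

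For each $(i,j)$, the ``otherwise'' branch of (\ref{eq:17n}) sets $\text{Proj}(\theta_{v_{i,j}},Y_{i,j})=Y_{i,j}$ and the summand vanishes. The only active case is $f_{i,j}>0$ and $Y_{i,j}(df_{i,j}/d\theta_{v_{i,j}})>0$, where the summand reduces to $-(\theta_{v_{i,j}}-\theta_{v_{i,j}}^*)Y_{i,j}f_{i,j}$, so I need to prove $(\theta_{v_{i,j}}-\theta_{v_{i,j}}^*)Y_{i,j}f_{i,j}\geq 0$. The first ingredient is the sign of $f_{i,j}$ on the admissible set for $\theta_{v_{i,j}}^*$: the numerator in (\ref{eq:18n}) is a quadratic whose roots are precisely $\theta_{min_{i,j}}+\zeta_{i,j}$ and $\theta_{max_{i,j}}-\zeta_{i,j}$, and the denominator is positive under the standing assumptions on $\zeta_{i,j}$; hence $f_{i,j}(\theta_{v_{i,j}}^*)\leq 0$ for every admissible $\theta_{v_{i,j}}^*$. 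The second ingredient is the first-order characterization of convexity applied to $f_{i,j}$: $f_{i,j}(\theta_{v_{i,j}}^*)\geq f_{i,j}(\theta_{v_{i,j}})+(df_{i,j}/d\theta_{v_{i,j}})(\theta_{v_{i,j}}^*-\theta_{v_{i,j}})$. Rearranging and combining with the preceding sign information yields
\begin{equation*}
\frac{df_{i,j}}{d\theta_{v_{i,j}}}(\theta_{v_{i,j}}-\theta_{v_{i,j}}^*)\;\geq\;f_{i,j}(\theta_{v_{i,j}})-f_{i,j}(\theta_{v_{i,j}}^*)\;\geq\;f_{i,j}(\theta_{v_{i,j}})\;>\;0.
\end{equation*}

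The conclusion will then follow by chaining signs: the last display says $\theta_{v_{i,j}}-\theta_{v_{i,j}}^*$ has the same sign as $df_{i,j}/d\theta_{v_{i,j}}$, while the case hypothesis forces $Y_{i,j}$ to share that sign, so $(\theta_{v_{i,j}}-\theta_{v_{i,j}}^*)Y_{i,j}\geq 0$, and multiplication by the positive scalar $f_{i,j}$ preserves the inequality. Summing over $(i,j)$ delivers (\ref{eq:e20xx}). The main obstacle in the argument is the careful sign bookkeeping: one must first verify from the normalization in (\ref{eq:18n}) that $f_{i,j}\leq 0$ on the entire closed interval $[\theta_{min_{i,j}}+\zeta_{i,j},\theta_{max_{i,j}}-\zeta_{i,j}]$, and then align the direction of the convexity inequality with the case condition of (\ref{eq:17n}) so that all three factors $\theta_{v_{i,j}}-\theta_{v_{i,j}}^*$, $Y_{i,j}$, and $f_{i,j}$ line up correctly. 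The remaining manipulations are routine algebra.
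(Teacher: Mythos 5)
Your proof is correct, and the sign bookkeeping (positivity of the denominator of $f_{i,j}$, hence $f_{i,j}(\theta_{v_{i,j}}^*)\leq 0$ on the admissible interval, followed by the first-order convexity inequality to align the signs of $\theta_{v_{i,j}}-\theta_{v_{i,j}}^*$, $df_{i,j}/d\theta_{v_{i,j}}$, and $Y_{i,j}$) is exactly the standard argument. The paper does not prove this lemma itself but defers to the cited reference, and your argument reproduces the proof given there, so there is nothing to compare beyond noting that you have supplied the details the paper omits.
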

\begin{proof}
		The proof of Lemma \ref{lem2} can be found in \cite{Lav13}.
\end{proof}


\section{Problem statement}\label{sec4}
In this section, firstly, the over-actuated plant with constrained uncertain actuators is introduced. Then, the adaptive control allocation utilizing the conventional projection algorithm (\ref{eq:17n}), which can bound only the magnitude of actuators input signals, is presented. Finally, the problem statement motivating the proposed projection algorithm is given.

Consider the following uncertain over-actuated plant dynamics
\begin{align}\label{eq:e1x}
\ \ \dot{x}&=Ax+B_u\Lambda u\notag \\
&=Ax+B_vB\Lambda u\notag \\
&=Ax+B_vv_s,
\end{align}
where $x\in \mathbb{R}^{n}$ is the state vector, $u=[u_1,...,u_m]^T\in \mathbb{R}^{m}$ is the magnitude constrained actuator command vector, where $u_j\in [u_{\text{min}_{j}},u_{\text{max}_{j}}]$ with $ u_{\text{max}_{j}}>0 $ and $ u_{\text{min}_{j}}<0 $. The matrix $A \in \mathbb{R}^{n\times n}$ is the known state matrix and $B_u=B_vB \in \mathbb{R}^{n\times m}$ is the known rank deficient control input matrix which is decomposed into the known matrices $B_v \in \mathbb{R}^{n\times r}$ and $B \in \mathbb{R}^{r \times m}$ such that $ rank(B)=rank(B_v)=r $. The actuator loss of effectiveness is modeled as a diagonal matrix $\Lambda \in \mathbb{R}^{m\times m}$ with uncertain positive elements. The goal of the static control allocation methods in the absence of uncertainty, where $ \Lambda=I_m $, is to distribute the total control effort $ v_s\in \mathbb{R}^r $, produced by a controller, to the redundant actuators such that $ Bu=v_s $. 
In the presence of uncertainty, the static control allocation methods are not applicable since the goal of the control allocation becomes
\begin{equation}\label{eq:e2x}
B\Lambda u=v_s.
\end{equation}
One way to achieve (\ref{eq:e2x}) is by employing the following control allocation system proposed by \cite{TohYil16}
\begin{subequations}\label{eq:aloc}
	\begin{align}
	\dot{\xi}&=A_m\xi+B\Lambda u-v_s,\label{eq:4}\\
	\dot{\xi}_m&=A_m\xi_m,\label{eq:5}\\
	\dot{\theta}_{v}&=g(\theta_{v}, Y(v_s, e)),\label{eq:55}\\
	u&={\theta}_v^Tv_s,\label{eq:6}
	\end{align}
\end{subequations}
where $ \xi\in \mathbb{R}^{r} $ is the output of the virtual dynamics, $ \theta_v\in \mathbb{R}^{r\times m} $ is the adaptive parameter to be updated, $ \xi_m\in \mathbb{R}^{r} $ is the output of the reference model, $ e= \xi-\xi_m $, (\ref{eq:5}) is the reference model with a Hurwitz matrix $ A_m\in \mathbb{R}^{r\times r} $, (\ref{eq:55}) is the adaptive law where $ g(.,.):\mathbb{R}^{r\times m}\times \mathbb{R}^{r\times m}\rightarrow \mathbb{R}^{r\times m} $ is a projection algorithm, and $ u $ is the control allocation signal, or the actuator command signal. It can be shown that (\cite{TohYil16}), in the absence of actuator limits, $ e $ converges to zero and thus the control allocation goal (\ref{eq:e2x}) is achieved. In the presence of actuator magnitude limits, $ e $ converges to a predetermined compact set (\cite{TohYil19, TohYil20}).

In the presence of actuator magnitude limits, if the control signal $ v_s $ is bounded, then (\ref{eq:6}) shows that in order to produce actuator command signals $ u_j, j=1, ..., m $, that respect the actuator saturation bounds, such that $u_j\in [u_{\text{min}_{j}},u_{\text{max}_{j}}]$, the elements of the adaptive parameter matrix $ \theta_v $ should be appropriately bounded. It is shown in \cite{TohYil19, TohYil20} that this could be achieved, together with the stability of the overall system dynamics, by using the conventional projection operator (\ref{eq:17n}) as the function $ g $ in (\ref{eq:55}). 



Problem statement: If the actuators in (\ref{eq:e1x}) are not only magnitude saturated but also rate saturated, i.e. $ \dot{u}_j \in [\dot{u}_{min_{j}}, \dot{u}_{max_{j}}] $, $ j=1,..., m $, how should the projection algorithm (\ref{eq:17n}), which is used as the function $ g $ in (\ref{eq:55}),  be modified to handle this additional condition?

To address the above problem, we need to reconstruct the conventional projection algorithm (\ref{eq:17n}) such that not only the magnitude but also the rate of change of the elements of the matrix $ \theta_v $ become bounded. This problem needs to be solved in such a way that the new projection algorithm must have useful properties similar to the ones given in Lemma \ref{lem1} and Lemma \ref{lem2}, to ensure the stability of the closed loop control system. In the next section, this new projection algorithm is introduced. 


\section{Modified projection algorithm}\label{sec3}

The structure of the overall closed loop control system considered in this paper, consisting of the controller, the control allocator and the plant, is presented in Figure~\ref{fig:figure3}.
\begin{figure}
	\begin{center}
		\vspace{0.2cm}
		\includegraphics[width=12cm]{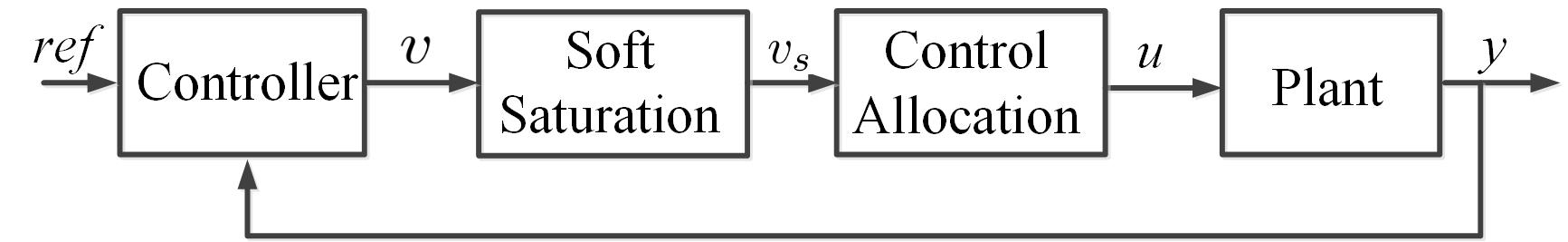}    %
		\caption{Closed loop control system.} 
		\label{fig:figure3}
	\end{center}
\end{figure}
The soft saturation introduced after the controller ensures that the input of the control allocator, $ v_s $, and its derivative, $ \dot{v}_s $, are bounded. From (\ref{eq:55}) and (\ref{eq:6}), it can be seen that one way to obtain a bounded actuator command signal $ u $ is to restrict both the magnitude and the rate of change of the adaptive parameter matrix $ \theta_v $. This restriction must be achieved while ensuring the boundedness of all the signals in the closed loop control system. It is noted that a rate and magnitude bounded total control input $ v_s $ does not guarantee a rate and magnitude bounded actuator input signal vector $ u $, due to the nature of the adaptation in the control allocator. 




The approach proposed in this paper for bounding the adaptive parameter matrix $ \theta_v $ in terms of both magnitude and rate is based on projecting $ Y_{i,j} $ and $ \theta_{v_{i,j}} $, simultaneously. In this method, apart from the function $ f_{i,j} $ introduced in (\ref{eq:18n}), another convex and continuously differentiable function given as
\begin{equation} \label{eq:18nxx}
h_{i,j}=h({Y}_{{i,j}})=\ \frac{({Y}_{{i,j}}-Y_{min_{i,j}}-\epsilon_{i,j})(Y_{{i,j}}-Y_{max_{i,j}}+\epsilon_{i,j})}{(Y_{max_{i,j}}-Y_{min_{i,j}}-\epsilon_{i,j})\epsilon_{i,j}}
\end{equation}
is introduced, where $Y_{max_{i,j}}>0$ and $Y_{min_{i,j}}<0$ are the allowable maximum and minimum bounds of $Y_{{i,j}}$, respectively, and $ \epsilon_{i,j}\in \mathbb{R^+} $ is the projection tolerance such that $ Y_{max_{i,j}}-\epsilon_{i,j}>0 $ and $ Y_{min_{i,j}}+\epsilon_{i,j}<0 $.

Using (\ref{eq:18n}) and (\ref{eq:18nxx}), an element-wise, modified projection algorithm is proposed as
\begin{align} \label{eq:17nx}
\text{Proj}_m(\theta_{v_{i,j}}, Y_{i,j})\equiv \left \{\begin{array}{l}Y_{i,j}(1-\hat{f}_{i,j})(1-\hat{h}_{i,j})\ \ \ \ \text{if}\ f_{i,j}\geq 0 \ \ \ \& \ \ Y_{i,j} \frac{df_{i,j}}{d\theta_{v_{i,j}}}\geq 0\ \ \& \ h_{i,j}\geq 0\\  Y_{i,j}(1-\hat{f}_{i,j}) \ \ \ \ \ \ \ \ \ \ \ \ \ \ \ \ \text{if}\ f_{i,j}>0 \ \ \ \& \ \ Y_{i,j} \frac{df_{i,j}}{d\theta_{v_{i,j}}} >0\\ Y_{i,j}(1-\hat{h}_{i,j}) \ \ \ \ \ \ \ \ \ \ \ \ \ \ \ \  \text{if}\ h_{i,j}>0 \\  Y_{i,j} \ \ \ \ \ \ \ \ \ \ \ \ \ \ \ \ \ \ \ \ \ \ \ \ \ \ \ \ \text{otherwise}, \end{array}\right.
\end{align}
where $ \hat{f}_{i,j}=min\{1,f_{i,j}\} $ and $ \hat{h}_{i,j}=min\{1,h_{i,j}\} $. 

Using this projection algorithm, the adaptive law is given as $ \dot{\theta}_{v_{i,j}}=\text{Proj}_m(\theta_{v_{i,j}},Y_{i,j}) $. In the proposed projection algorithm defined in (\ref{eq:17nx}), when $ \theta_{v_{i,j}} $ reaches its boundary value ($ \theta_{max_{i,j}} $ or $ \theta_{min_{i,j}} $), $ f_{i,j} $ reaches $ 1 $, and from the first and second conditions of (\ref{eq:17nx}), $ \text{Proj}_m(\theta_{v_{i,j}}, Y_{i,j}) $ reaches zero. When $ Y_{{i,j}} $ reaches its boundary value ($ Y_{max_{i,j}} $ or $ Y_{min_{i,j}} $), $ h_{i,j} $ reaches $ 1 $, and from the first and third conditions of (\ref{eq:17nx}), $ \text{Proj}_m(\theta_{v_{i,j}}, Y_{i,j}) $ reaches zero. In addition, since $ f_{i,j} $ and $ h_{i,j} $ cannot exceed one, the magnitude and rate of $ \theta_{v_{i,j}} $ are both bounded. A formal proof is given below, in Lemma \ref{lem5}. It is noted that it is not necessary to take the time derivative of any signal to implement the proposed projection algorithm.



\begin{lemma}\label{lem5}
	Given the adaptive law $\dot{\theta}_{v_{i,j}}=\emph{Proj}_m(\theta_{v_{i,j}}, Y_{i,j})$, where the projection operator is given in (\ref{eq:17nx}), together with convex and continuously differentiable functions (\ref{eq:18n}) and (\ref{eq:18nxx}), if the initial conditions are defined as $\theta_{v_{i,j}}(0)\in \Omega_{i,j}=\{\theta_{v_{i,j}}\in \mathbb{R}|f(\theta_{v_{i,j}})\leq 1\}$ and $Y_{{i,j}}(0)\in \bar{\Omega}_{i,j}=\{Y_{{i,j}}\in \mathbb{R}|
	h(Y_{{i,j}})\leq 1\}$, then $\theta_{v_{i,j}}(t)\in \Omega_{i,j}$ and $Y_{{i,j}}(t)\in \bar{\Omega}_{i,j}$ for all $ t\geq 0$.
\end{lemma}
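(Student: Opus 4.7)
The plan is to model the proof on the invariance argument used for Lemma~\ref{lem1}, treating $V(t)=f(\theta_{v_{i,j}}(t))$ as a scalar Lyapunov-like function. Since $V(0)\leq 1$ by hypothesis, it will suffice to show that $\dot V(t)\leq 0$ whenever $V(t)=1$, which renders $\Omega_{i,j}$ forward-invariant under $\dot{\theta}_{v_{i,j}}=\text{Proj}_m(\theta_{v_{i,j}},Y_{i,j})$ and hence yields $\theta_{v_{i,j}}(t)\in\Omega_{i,j}$ for all $t\geq 0$. The right-hand side is bounded in $\theta_{v_{i,j}}$, so this invariance argument will simultaneously rule out finite-time escape and justify extending the solution to $[0,\infty)$.

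The crux is a case analysis at $f_{i,j}=1$, where $\hat f_{i,j}=\min\{1,f_{i,j}\}=1$ and hence $(1-\hat f_{i,j})=0$. Under a top-to-bottom priority reading of the four branches of (\ref{eq:17nx}), branches~1 and~2 both contain the factor $(1-\hat f_{i,j})$ and therefore return zero, giving $\dot V=0$. Branches~3 and~4 are reached only when the antecedents of branches~1 and~2 fail; combined with $f_{i,j}=1>0$, this forces $Y_{i,j}\,df_{i,j}/d\theta_{v_{i,j}}\leq 0$. The remaining multiplicative factor, $(1-\hat h_{i,j})$ in branch~3 and simply $1$ in branch~4, is non-negative, so $\dot V=(df_{i,j}/d\theta_{v_{i,j}})\cdot\text{Proj}_m(\theta_{v_{i,j}},Y_{i,j})\leq 0$ in both cases. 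Assembling the four subcases establishes the first conclusion.

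For the second conclusion, $Y_{i,j}(t)\in\bar\Omega_{i,j}$, I plan to invoke the closed-loop architecture of Figure~\ref{fig:figure3} rather than the projection itself, since $\text{Proj}_m$ updates $\theta_{v_{i,j}}$ and not $Y_{i,j}$. The soft saturation downstream of the controller guarantees that $v_s$ is magnitude- and rate-bounded; together with boundedness of the tracking error $e$ and a suitable choice of the design parameters $Y_{\max_{i,j}}$, $Y_{\min_{i,j}}$, $\epsilon_{i,j}$ relative to the reachable range of $Y(v_s,e)$, this keeps $h(Y_{i,j}(t))\leq 1$ along trajectories, which, combined with $h(Y_{i,j}(0))\leq 1$, completes the lemma.

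The main obstacle I anticipate is bookkeeping the overlap among the four branches of (\ref{eq:17nx}): their antecedents as written are not mutually exclusive, so I must first fix a consistent priority convention and then verify that every boundary scenario $f_{i,j}=1$, across all sign combinations of $Y_{i,j}\,df_{i,j}/d\theta_{v_{i,j}}$ and $h_{i,j}$, is covered without gaps. A subtler point is the asymmetry inherent in the statement: $\theta_{v_{i,j}}$ is a state of the adaptive ODE while $Y_{i,j}$ is an exogenous signal, so the invariance of $\bar\Omega_{i,j}$ cannot be derived from the algebra of the projection and must instead be tied explicitly to the upstream soft-saturation assumption.
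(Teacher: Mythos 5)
Your argument for the forward invariance of $\Omega_{i,j}$ is essentially the paper's: both compute $\dot f_{i,j}=(df_{i,j}/d\theta_{v_{i,j}})\,\text{Proj}_m(\theta_{v_{i,j}},Y_{i,j})$, run a case analysis over the branches of (\ref{eq:17nx}), and observe that every branch reachable at $\hat f_{i,j}=1$ either carries the factor $(1-\hat f_{i,j})$ or forces $(df_{i,j}/d\theta_{v_{i,j}})\,\text{Proj}_m\leq 0$, so $f_{i,j}$ cannot cross $1$; your explicit remark about fixing a top-to-bottom priority among the non-exclusive branches is a point the paper leaves implicit. Where you genuinely diverge is the second conclusion. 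The paper handles $Y_{i,j}$ symmetrically, writing $\dot h_{i,j}=(dh_{i,j}/dY_{i,j})(dY_{i,j}/d\theta_{v_{i,j}})\,\text{Proj}_m(\theta_{v_{i,j}},Y_{i,j})$ and asserting that ``the same procedure'' yields invariance of $\bar\Omega_{i,j}$; this implicitly treats $Y_{i,j}$ as a function of $\theta_{v_{i,j}}$ alone, so that the vanishing of $\text{Proj}_m$ at $h_{i,j}=1$ freezes $Y_{i,j}$ as well. You instead decline to derive invariance of $\bar\Omega_{i,j}$ from the projection algebra, correctly noting that $Y=-v_se^TPB$ is driven also by $v_s$ and $e$, and you appeal to the soft saturation and a suitable choice of $Y_{\max_{i,j}}$, $Y_{\min_{i,j}}$, $\epsilon_{i,j}$. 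That observation exposes a real weakness in the paper's own proof, but your replacement does not prove the lemma as stated: boundedness of $e$ and the reachable range of $Y(v_s,e)$ are not among the lemma's hypotheses (the error bound is only established afterwards, in Theorem~\ref{thm2}), so you are proving a different, conditional statement. To close this, either adopt the paper's convention that the time variation of $Y_{i,j}$ is channeled entirely through $\theta_{v_{i,j}}$ and carry out the $\dot h_{i,j}$ case analysis explicitly, or restate the lemma with the exogenous bounds on $v_s$ and $e$ as explicit hypotheses.
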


\begin{proof}
	Taking the time derivative of the convex function $ f(\theta_{v_{i,j}}) $ along the dynamics of $ \theta_{v_{i,j}} $, we have
	\begin{align} 
	\frac{df_{i,j}}{dt}&=\frac{df_{i,j}}{d\theta_{v_{i,j}}}\frac{d\theta_{v_{i,j}}}{dt}=\frac{df_{i,j}}{d\theta_{v_{i,j}}}\text{Proj}_m(\theta_{v_{i,j}}, Y_{i,j})\notag \\ &= \left \{\begin{array}{l}\frac{df_{i,j}}{d\theta_{v_{i,j}}}Y_{i,j}(1-\hat{f}_{i,j})(1-\hat{h}_{i,j})\ \ \ \ \ \text{if}\ f_{i,j}\geq 0 \ \ \ \& \ \ Y_{i,j} \frac{df_{i,j}}{d\theta_{v_{i,j}}}\geq 0 \ \ \& \ h_{i,j}\geq 0 \\  \frac{df_{i,j}}{d\theta_{v_{i,j}}}Y_{i,j}(1-\hat{f}_{i,j}) \ \ \ \ \ \ \ \ \ \ \ \ \ \ \ \ \  \text{if}\ f_{i,j}>0 \ \ \ \& \ \ Y_{i,j} \frac{df_{i,j}}{d\theta_{v_{i,j}}} >0\\ \frac{df_{i,j}}{d\theta_{v_{i,j}}}Y_{i,j}(1-\hat{h}_{i,j}) \ \ \ \ \ \ \ \ \ \ \ \ \ \ \ \ \ \text{if}\ h_{i,j}>0\\  \frac{df_{i,j}}{d\theta_{v_{i,j}}}Y_{i,j} \ \ \ \ \ \ \ \ \ \ \ \ \ \ \ \ \ \ \ \ \ \ \ \ \ \ \ \ \ \text{otherwise} \end{array}\right.\notag
	\end{align}
	\begin{align}
	&\Rightarrow \left \{\begin{array}{l}\frac{df_{i,j}}{dt}=0\ \ \ \ \ \ \ \ \ \ \ \ \ \ \text{if}\ f_{i,j}=1 \ \ \ \ \ \ \ \& \ \ \ \ Y_{i,j} \frac{df_{i,j}}{d\theta_{v_{i,j}}}\geq 0\ \ \ \ \& \ \ \ \ h_{i,j}=1 \\
	\frac{df_{i,j}}{dt}=0\ \ \ \ \ \ \ \ \ \ \ \ \ \ \text{if}\ 0\leq f_{i,j}<1 \ \ \& \ \ \ \ Y_{i,j} \frac{df_{i,j}}{d\theta_{v_{i,j}}}\geq 0\ \ \ \ \& \ \ \ \ h_{i,j}=1 \\
	\frac{df_{i,j}}{dt}=0\ \ \ \ \ \ \ \ \ \ \ \ \ \ \text{if}\ f_{i,j}=1 \ \ \ \ \ \ \ \& \ \ \ \ Y_{i,j} \frac{df_{i,j}}{d\theta_{v_{i,j}}}\geq 0\ \ \ \ \& \ \ \ \ 0\leq h_{i,j}<1\\
	\frac{df_{i,j}}{dt}>0\ \ \ \ \ \ \ \ \ \ \ \ \ \ \text{if}\ 0\leq f_{i,j}<1 \ \ \& \ \ \ Y_{i,j} \frac{df_{i,j}}{d\theta_{v_{i,j}}}\geq 0\ \ \ \ \& \ \ \ \ 0\leq h_{i,j}<1\\
	\frac{df_{i,j}}{dt}=0 \ \ \ \ \ \ \ \ \ \ \ \ \ \ \text{if}\ f_{i,j}=1 \ \ \ \ \ \ \ \ \& \ \ \  Y_{i,j} \frac{df_{i,j}}{d\theta_{v_{i,j}}} >0\\
	\frac{df_{i,j}}{dt}>0 \ \ \ \ \ \ \ \ \ \ \ \ \ \ \text{if}\ 0<f_{i,j}<1 \ \ \ \& \ \ \ Y_{i,j} \frac{df_{i,j}}{d\theta_{v_{i,j}}} >0\\ 
	\frac{df_{i,j}}{dt}=0 \ \ \ \ \ \ \ \ \ \ \ \ \ \ \text{if}\ h_{i,j}=1\\ 
	\end{array}\right.
	\end{align}
	Also, when $ \hat{f}_{i,j}=1 $, $ \frac{df_{i,j}}{dt}=\frac{df_{i,j}}{d\theta_{v_{i,j}}}\text{Proj}_m(\theta_{v_{i,j}}, Y_{i,j})\leq 0 $. Therefore, if $ \theta_{v_{i,j}}(0)\in \Omega_{i,j} $, $\theta_{v_{i,j}}(t)\in \Omega_{i,j}$ for all $ t\geq 0$. The same procedure can be followed for $ \frac{dh_{i,j}}{dt}=\frac{dh_{i,j}}{dY_{i,j}}\frac{dY_{i,j}}{d\theta_{v_{i,j}}}\text{Proj}_m(\theta_{v_{i,j}},Y_{i,j}) $ to prove that if $ Y_{{i,j}}(0)\in \bar{\Omega}_{i,j} $, then $Y_{{i,j}}(t)\in \bar{\Omega}_{i,j}$ for all $ t\geq 0$.
\end{proof}

Below, in Lemma \ref{lem6}, a property of the proposed projection algorithm, which is analogous to Lemma \ref{lem2}, is given, which will be useful later in the stability investigation.

\begin{lemma}\label{lem6}
	Let $ \theta_{v_{i,j}}^*\in [\theta_{min_{i,j}}+\zeta_{i,j}\ \ \theta_{max_{i,j}}-\zeta_{i,j}] $, $Y_{{i,j}}(0)\in \bar{\Omega}_{i,j}=\{Y_{{i,j}}\in \mathbb{R}|
	h(Y_{{i,j}})\leq 1\}$, and consider the projection algorithm (\ref{eq:17nx}) with convex functions (\ref{eq:18n}) and (\ref{eq:18nxx}). The inequality
	\begin{equation}\label{eq:e20}
	\begin{array}{ll}
	tr( ({\theta}_v^T-{{\theta}_v^*}^T) ( -Y + \emph{Proj}_m(\theta_v, Y) ) ) \leq ||\tilde{\theta}_{max}||_F||{Y}_{MAX}||_F
	\end{array}
	\end{equation}
	holds, where $ \tilde{\theta}_{max} $ and $ {Y}_{MAX} $ are the matrices whose elements constitute the upper bounds of the absolute values of the elements of $ \tilde{\theta} $ and $ Y $, respectively.	
\end{lemma}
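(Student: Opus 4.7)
The plan is to expand the trace element-wise, writing the left-hand side as $\sum_{i,j} (\theta_{v_{i,j}} - \theta_{v_{i,j}}^*)(-Y_{i,j} + \text{Proj}_m(\theta_{v_{i,j}}, Y_{i,j}))$, and then bound the contribution of each index pair by analyzing which of the four branches of (\ref{eq:17nx}) is active. Cases 1, 2, and 4 will be shown to contribute nonpositively, while case 3 is the only branch that can produce a positive term, and that term must be absorbed into the right-hand side bound.

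For cases 1, 2, and 4 I would recycle the argument used in the proof of Lemma \ref{lem2}. Case 4 is trivial because $\text{Proj}_m = Y_{i,j}$, so the contribution is zero. In case 2 we have $-Y_{i,j} + \text{Proj}_m = -Y_{i,j}\hat{f}_{i,j}$, and in case 1 a short rearrangement gives $-Y_{i,j} + \text{Proj}_m = -Y_{i,j}\bigl[\hat{f}_{i,j} + \hat{h}_{i,j}(1-\hat{f}_{i,j})\bigr]$, whose bracketed factor is nonnegative since $\hat{f}_{i,j}, \hat{h}_{i,j} \in [0,1]$. Convexity of $f_{i,j}$ together with the hypothesis $\theta_{v_{i,j}}^* \in [\theta_{\min_{i,j}}+\zeta_{i,j}, \theta_{\max_{i,j}}-\zeta_{i,j}]$ (which forces $f_{i,j}(\theta_{v_{i,j}}^*) \leq 0$) then implies $(\theta_{v_{i,j}}-\theta_{v_{i,j}}^*)Y_{i,j} \geq 0$ under the case-1 and case-2 conditions, exactly as in Lemma \ref{lem2}. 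Hence each of these branches contributes at most zero to the sum.

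Case 3 is the genuinely new ingredient and will be the main obstacle. Here $h_{i,j} > 0$ but the $\hat{f}$-conditions fail, so the sign of $(\theta_{v_{i,j}}-\theta_{v_{i,j}}^*)Y_{i,j}$ is no longer pinned down by convexity. The idea will be to bound the contribution absolutely rather than by sign: since $-Y_{i,j} + \text{Proj}_m = -Y_{i,j}\hat{h}_{i,j}$ with $\hat{h}_{i,j}\in[0,1]$, I would estimate the contribution by $|\theta_{v_{i,j}}-\theta_{v_{i,j}}^*|\,|Y_{i,j}|$, which by the definitions of the element-wise upper bounds does not exceed $(\tilde{\theta}_{max})_{i,j}(Y_{MAX})_{i,j}$. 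The Lemma \ref{lem5} invariance ensures $|Y_{i,j}|$ stays within these bounds along trajectories, so the estimate is valid for all $t$.

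Finally, I would aggregate across $(i,j)$: discarding the nonpositive contributions from cases 1, 2, and 4, the trace is at most $\sum_{i,j}(\tilde{\theta}_{max})_{i,j}(Y_{MAX})_{i,j}$. Applying the Cauchy--Schwarz inequality in the Frobenius inner product converts this sum of products into $\|\tilde{\theta}_{max}\|_F\,\|Y_{MAX}\|_F$, matching the bound claimed in the lemma. The only delicate point besides case 3 is verifying that the convexity/sign step for case 1 truly survives the insertion of the $(1-\hat{h}_{i,j})$ factor, but the factorization $-Y_{i,j}[\hat{f}_{i,j}+\hat{h}_{i,j}(1-\hat{f}_{i,j})]$ makes this immediate.
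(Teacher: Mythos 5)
Your proposal is correct and follows essentially the same route as the paper: element-wise expansion of the trace, sign-based nonpositivity for the branches where the convexity argument of Lemma \ref{lem2} applies, an absolute bound $|\theta_{v_{i,j}}-\theta_{v_{i,j}}^*||Y_{i,j}|\leq(\tilde{\theta}_{max})_{i,j}(Y_{MAX})_{i,j}$ for the $h_{i,j}>0$ branch, and Cauchy--Schwarz to obtain the Frobenius-norm bound. Your factorization $-Y_{i,j}\bigl[\hat{f}_{i,j}+\hat{h}_{i,j}(1-\hat{f}_{i,j})\bigr]$ in the first branch is in fact a slightly cleaner way of reaching the same nonpositivity conclusion than the paper's comparison of $|Y_{i,j}\hat{f}_{i,j}\hat{h}_{i,j}|$ against $|Y_{i,j}\hat{f}_{i,j}|$ and $|Y_{i,j}\hat{h}_{i,j}|$.
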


\begin{proof}
	If $ f_{i,j}\geq 0 $, $ Y_{i,j} (df_{i,j}/d\theta_{v_{i,j}})\geq 0 $ and $ h_{i,j}\geq 0 $ (first condition), then
	\begin{equation}\label{eq:e20pfx}
	\begin{array}{ll}
	&tr\big( ({\theta}_v^T-{{\theta}_v^*}^T) \big( -Y + \text{Proj}_m(\theta_v, Y) \big) \big) \\
	&=\displaystyle \sum_{j=1}^{m} \sum_{i=1}^{r} ({\theta}_{v_{i,j}}-{{\theta}_{v_{i,j}}^*}) \big( -Y_{i,j} + \text{Proj}_m(\theta_{v_{i,j}}, Y_{i,j})\big) \\
	&=\displaystyle \sum_{j=1}^{m} \sum_{i=1}^{r} ({\theta}_{v_{i,j}}-{{\theta}_{v_{i,j}}^*}) \big( -Y_{i,j} + Y_{i,j}(1-\hat{f}_{i,j})(1-\hat{h}_{i,j})\big) \\
	&=\displaystyle \sum_{j=1}^{m} \sum_{i=1}^{r} ({\theta}_{v_{i,j}}-{{\theta}_{v_{i,j}}^*}) \big( -Y_{i,j}\hat{f}_{i,j}-Y_{i,j}\hat{h}_{i,j}+Y_{i,j}\hat{f}_{i,j}\hat{h}_{i,j} )\big).
	\end{array}
	\end{equation}
	$ 0 \leq \hat{h}_{i,j}\leq 1 $ and $ 0\leq \hat{f}_{i,j}\leq 1 $, therefore $ |Y_{i,j}\hat{f}_{i,j}|\geq |Y_{i,j}\hat{f}_{i,j}\hat{h}_{i,j}| $ and $ |Y_{i,j}\hat{h}_{i,j}|\geq |Y_{i,j}\hat{f}_{i,j}\hat{h}_{i,j}| $. Hence, 
	\begin{equation}\label{eq:e20pfxx}
	\begin{array}{ll}
	&\displaystyle \sum_{j=1}^{m} \sum_{i=1}^{r} ({\theta}_{v_{i,j}}-{{\theta}_{v_{i,j}}^*}) \big( -Y_{i,j}\hat{f}_{i,j}-Y_{i,j}\hat{h}_{i,j}+Y_{i,j}\hat{f}_{i,j}\hat{h}_{i,j} )\big)\\
	&\leq \displaystyle \sum_{j=1}^{m} \sum_{i=1}^{r} \underbrace{({{\theta}_{v_{i,j}}^*}-{\theta}_{v_{i,j}}) Y_{i,j}\hat{f}_{i,j}\hat{h}_{i,j}}_{<0}<0.
	\end{array}
	\end{equation}
	If $ h_{i,j}>0 $ (third condition), then
	\begin{equation}\label{eq:e20pfxyz}
	\begin{array}{ll}
	&tr\big( ({\theta}_v^T-{{\theta}_v^*}^T) \big( -Y + \text{Proj}_m(\theta_v, Y) \big) \big) \\
	&=\displaystyle \sum_{j=1}^{m} \sum_{i=1}^{r} ({\theta}_{v_{i,j}}-{{\theta}_{v_{i,j}}^*}) \big( -Y_{i,j} + \text{Proj}_m(\theta_{v_{i,j}}, Y_{i,j})\big) \\
	&=\displaystyle \sum_{j=1}^{m} \sum_{i=1}^{r} ({\theta}_{v_{i,j}}-{{\theta}_{v_{i,j}}^*}) \big( -Y_{i,j} + Y_{i,j}(1-\hat{h}_{i,j})\big) \\
	&\leq \displaystyle \sum_{j=1}^{m} \sum_{i=1}^{r} |{{\theta}_{v_{i,j}}^*}-{\theta}_{v_{i,j}}|Y_{MAX_{i,j}}\\
	&=tr(|\tilde{\theta}_v^T|Y_{MAX})\leq ||\tilde{\theta}_{max}||_F||Y_{MAX}||_F.
	\end{array}
	\end{equation}
	Same procedure used in the proof of Lemma \ref{lem2} can be employed to complete the proof for the second and fourth conditions. 
\end{proof}



Discontinuity in the projection algorithm is not desirable and may cause numerical problems. In the following lemma, we prove that the proposed projection algorithm is continuous.
\begin{lemma}\label{lem7x}
	For continuous $ \theta_{v_{i,j}} $ and $ Y_{i,j} $, the function $\emph{Proj}_m({\theta}_{v_{i,j}}, Y_{i,j}):S_{\theta} \times S_{Y} \rightarrow \mathbb{R}$, where $ S_{\theta}, S_{Y}\subset \mathbb{R} $, is continuous.
\end{lemma}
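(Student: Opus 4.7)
The plan is to establish continuity by verifying that the four branches in (\ref{eq:17nx}) agree on the boundaries separating their regions of definition. On the interior of each region the expression is a polynomial combination of $\theta_{v_{i,j}}$, $Y_{i,j}$, $\hat{f}_{i,j}=\min\{1,f(\theta_{v_{i,j}})\}$ and $\hat{h}_{i,j}=\min\{1,h(Y_{i,j})\}$, all of which are continuous in the underlying arguments, so interior continuity is immediate. The nontrivial work is at the transition loci $\{f_{i,j}=0\}$, $\{h_{i,j}=0\}$ and $\{Y_{i,j}\,df_{i,j}/d\theta_{v_{i,j}}=0\}$.

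The key observation I would use to dispose of the third locus is that $f_{i,j}$, viewed as a quadratic in $\theta_{v_{i,j}}$, has its unique critical point at the midpoint of its two roots $\theta_{min_{i,j}}+\zeta_{i,j}$ and $\theta_{max_{i,j}}-\zeta_{i,j}$, and there $f_{i,j}$ attains a strictly negative minimum. Consequently, wherever $f_{i,j}\geq 0$ one has $df_{i,j}/d\theta_{v_{i,j}}\neq 0$. Any point on the locus $Y_{i,j}\,df_{i,j}/d\theta_{v_{i,j}}=0$ that borders a region in which this sign is actively used must therefore have $Y_{i,j}=0$, and there every branch of (\ref{eq:17nx}) evaluates to zero, so no discontinuity arises as $Y_{i,j}\,df_{i,j}/d\theta_{v_{i,j}}$ changes sign.

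For the remaining two loci I would exploit the clipping identities $\hat{f}_{i,j}=0$ on $\{f_{i,j}=0\}$ and $\hat{h}_{i,j}=0$ on $\{h_{i,j}=0\}$. At $f_{i,j}=0$ the first-branch value $Y_{i,j}(1-\hat{f}_{i,j})(1-\hat{h}_{i,j})$ collapses to $Y_{i,j}(1-\hat{h}_{i,j})$, which matches the third-branch value reached from $f_{i,j}<0$, while the second branch collapses to $Y_{i,j}$, matching the ``otherwise'' branch. A symmetric computation at $h_{i,j}=0$ shows that the first and second branches coincide there, and the third and ``otherwise'' branches coincide. These matchings exhaust the relevant adjacencies, giving continuity across every boundary.

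The main obstacle I expect is the bookkeeping associated with the mixed strict and non-strict inequalities in (\ref{eq:17nx})---the first branch uses $\geq$ throughout, whereas branches two and three use $>$, so naively the four defining conditions overlap on sets of positive measure and assign different formulas. To keep the argument clean I would abandon the branch-condition partition and instead partition $S_\theta\times S_Y$ using only the sharp loci $\{f=0\}$, $\{h=0\}$, $\{Yf'=0\}$; for each face of that partition I would verify agreement of the branch values reached from every adjacent open cell using the two clipping collapses above together with the observation that $df_{i,j}/d\theta_{v_{i,j}}\neq 0$ on $\{f_{i,j}\geq 0\}$, so that no genuine jump is possible at any shared boundary.
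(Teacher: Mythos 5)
Your proposal is correct, and it takes a genuinely different route from the paper's proof. The paper decomposes the domain into the regions $S_0,\dots,S_3$ and their subregions, fixes two representative boundaries (between $S_0$ and $S_{2,1}$, and between $S_{1,1}$ and $S_{3,1}$), and runs explicit $\epsilon$--$\delta$ computations showing that the two one-sided limits agree with the assigned value at each boundary point, asserting that the remaining boundaries follow by the same procedure. You instead observe that each of the four branch formulas is a globally continuous function of $(\theta_{v_{i,j}},Y_{i,j})$, so the whole question reduces to checking that adjacent branch formulas agree on the separating loci, and you dispose of each locus algebraically: $\hat{f}_{i,j}=0$ on $\{f_{i,j}=0\}$ collapses branch one onto branch three and branch two onto the otherwise branch, $\hat{h}_{i,j}=0$ on $\{h_{i,j}=0\}$ collapses branch one onto branch two and branch three onto the otherwise branch, and on $\{Y_{i,j}\,df_{i,j}/d\theta_{v_{i,j}}=0\}$ your observation that $f_{i,j}\geq 0$ forces $df_{i,j}/d\theta_{v_{i,j}}\neq 0$ (the vertex of the parabola lies strictly below zero) means the relevant part of that locus is $\{Y_{i,j}=0\}$, where every branch is $O(|Y_{i,j}|)$ and hence vanishes continuously. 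What your approach buys is uniformity and completeness: it treats all boundaries at once with no $\delta$ bookkeeping, and it explicitly handles the $Y_{i,j}\,df_{i,j}/d\theta_{v_{i,j}}=0$ transition, which the paper's two worked cases do not touch. What the paper's approach buys is an explicit quantitative modulus of continuity ($\delta$ as a function of $\gamma$) on the boundaries it does treat. Your closing remark about replacing the overlapping branch conditions by the partition induced by the sharp loci, and then pasting continuous formulas that agree on shared faces, is exactly the right way to make the argument rigorous; I verified that the adjacency list you implicitly rely on (including the corner $f_{i,j}=h_{i,j}=0$, where all four formulas equal $Y_{i,j}$) is exhaustive.
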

\begin{proof}
We first decompose the set of feasible $ (Y_{i,j}, \theta_{v_{i,j}}) $, denoted as $ S=S_{\theta} \times S_{Y}\subset \mathbb{R}^2 $, into the following subsets:
\begin{align}\label{eq:subsets}
&S_1=\bigcup_{\eta=1, 2}S_{1,\eta}=\{(Y_{i,j}, \theta_{v_{i,j}})| f_{i,j}>0, Y_{i,j}\frac{df_{i,j}}{d\theta_{v_{i,j}}}>0\},\notag \\
&S_2=\bigcup_{\eta=1, 2}S_{2,\eta}=\{(Y_{i,j}, \theta_{v_{i,j}})| h_{i,j}>0\},\notag \\
&S_3=\bigcup_{\eta=1, 2, 3, 4}S_{3,\eta}=\{(Y_{i,j},\theta_{v_{i,j}})| f_{i,j}, h_{i,j}\geq 0, Y_{i,j}\frac{df_{i,j}}{d\theta_{v_{i,j}}}\geq 0\},\notag \\
&S_0=S\setminus \big( \bigcup_{\eta=1, 2, 3} S_{\eta}\big),
\end{align}
which are illustrated in Figure \ref{fig:prf}. Since $ Y_{i,j} $, $ \theta_{v_{i,j}} $, $ f_{i,j} $ and $ h_{i,j} $ are continuous functions, the proposed projection operator (\ref{eq:17nx}) is continuous in each subspace of $ S $. Here, we will prove that the proposed projection is continuous also on the boundaries of these subsets.
\begin{figure}
	\begin{center}
		\vspace{0.2cm}
		\includegraphics[width=9cm]{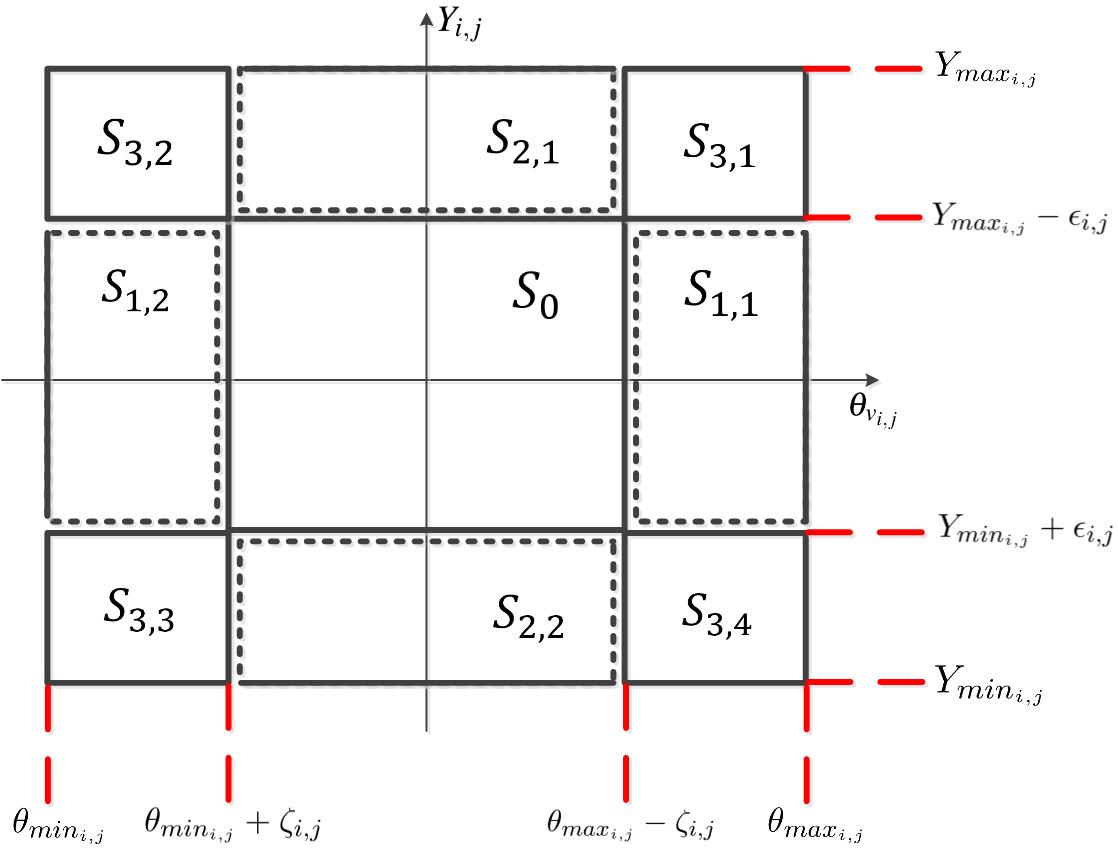}    %
		\caption{The decomposed set of feasible ($ Y_{i,j}, \theta_{v_{i,j}} $).} 
		\label{fig:prf}
	\end{center}
\end{figure}

 
Consider the boundary between $ S_0 $ and $ S_{2,1} $ (see Figure \ref{fig:prf}). Let the point $ (\theta_0,Y_{max_{i,j}}-\epsilon_{i,j})\in S_0 $ be an arbitrary point on the boundary. Notice that since $ S_0 $ is a closed set, the points on the boundary of $ S_0 $ and $ S_{2,1} $ belong to $ S_0 $. Therefore, in order to show that the proposed projection algorithm is continuous on the boundary of $ S_0 $ and $ S_{2,1} $, we should show that 
\begin{align}
\lim_{(\theta_{v_{i,j}}, Y_{i,j})\to (\theta_0, Y_{max_{i,j}}-\epsilon_{i,j})}\text{Proj}_m(\theta_{v_{i,j}}, Y_{i,j})=\text{Proj}_m(\theta_0,Y_{max_{i,j}}-\epsilon_{i,j})=Y_{max_{i,j}}-\epsilon_{i,j}
\end{align}
in both sets, $ S_0 $ and $ S_{2,1} $. 


First, consider taking the limit in the set $ S_2 $. For any given $ \gamma>0 $, there exists $ \delta_1=min\{\sqrt{2}\epsilon_{i,j},\frac{\sqrt{2}\epsilon_{i,j}\gamma}{\epsilon_{i,j}+Y_{max_{i,j}}}\} $ such that for $ Y_{i,j}\in (Y_{max_{i,j}}-\epsilon_{i,j},Y_{max_{i,j}}-\epsilon_{i,j}+\frac{\delta_1}{\sqrt{2}}) $ and $ \theta_{v_{i,j}}\in (\theta_0-\frac{\delta_1}{2\sqrt{2}},\theta_0+\frac{\delta_1}{2\sqrt{2}}) $, $0<\sqrt{(\theta_{i,j}-\theta_0)^2+(Y_{i,j}-Y_{max_{i,j}}+\epsilon_{i,j})^2}\leq \delta_1$. Then using $ |Y_{i,j}-Y_{max_{i,j}}+\epsilon_{i,j}|<\frac{\delta_1}{\sqrt{2}} $ we have
\begin{align}\label{eq:cont1}
|\text{Proj}_m(\theta_{v_{i,j}},Y_{i,j})-Y_{max_{i,j}}&+\epsilon_{i,j}|=|Y_{i,j}(1-\hat{h}_{i,j})-Y_{max_{i,j}}+\epsilon_{i,j}| \\
&\leq |Y_{i,j}-Y_{max_{i,j}}+\epsilon_{i,j}|+|Y_{i,j}\hat{h}_{i,j}|\notag \\
&<\frac{\delta_1}{\sqrt{2}}+\left| \frac{Y_{i,j}(Y_{i,j}-Y_{min_{i,j}}-\epsilon_{i,j})(Y_{i,j}-Y_{max_{i,j}}+\epsilon_{i,j})}{(Y_{max_{i,j}}-Y_{min_{i,j}}-\epsilon_{i,j})\epsilon_{i,j}}\right| .\notag
\end{align}
Considering $ Y_{i,j}\in (Y_{max_{i,j}}-\epsilon_{i,j},Y_{max_{i,j}}-\epsilon_{i,j}+\frac{\delta_1}{\sqrt{2}}) $, an upper bound on (\ref{eq:cont1}) can be calculated as
\begin{align}\label{eq:cont2}
&|\text{Proj}_m(\theta_{v_{i,j}},Y_{i,j})-Y_{max_{i,j}}+\epsilon_{i,j}| \notag \\
&<\frac{\delta_1}{\sqrt{2}}+\left| \frac{(Y_{max_{i,j}}-\epsilon_{i,j}+\frac{\delta_1}{\sqrt{2}})(Y_{max_{i,j}}-Y_{min_{i,j}}-2\epsilon_{i,j}+\frac{\delta_1}{\sqrt{2}})(\frac{\delta_1}{\sqrt{2}})}{(Y_{max_{i,j}}-Y_{min_{i,j}}-\epsilon_{i,j})\epsilon_{i,j}}\right|.
\end{align}
If $ \sqrt{2}\epsilon_{i,j}\leq \frac{\sqrt{2}\epsilon_{i,j}\gamma}{\epsilon_{i,j}+Y_{max_{i,j}}} $, then $ \gamma\geq \epsilon_{i,j}+Y_{max_{i,j}} $, and $ \delta_1=\sqrt{2}\epsilon_{i,j} $. Substituting $ \sqrt{2}\epsilon_{i,j} $ for $ \delta_1 $ in (\ref{eq:cont2}) leads to  
\begin{align}\label{eq:cont2x}
&|\text{Proj}_m(\theta_{v_{i,j}},Y_{i,j})-Y_{max_{i,j}}+\epsilon_{i,j}|
<\epsilon_{i,j}+Y_{max_{i,j}}\leq \gamma.
\end{align}
On the other hand, if $ \sqrt{2}\epsilon_{i,j}> \frac{\sqrt{2}\epsilon_{i,j}\gamma}{\epsilon_{i,j}+Y_{max_{i,j}}} $, then $ \gamma< \epsilon_{i,j}+Y_{max_{i,j}} $, and $ \delta_1=\frac{\sqrt{2}\epsilon_{i,j}\gamma}{\epsilon_{i,j}+Y_{max_{i,j}}} $. Substituting $ \frac{\sqrt{2}\epsilon_{i,j}\gamma}{\epsilon_{i,j}+Y_{max_{i,j}}} $ in (\ref{eq:cont2}) leads to  
\begin{align}\label{eq:cont2xx}
&|\text{Proj}_m(\theta_{v_{i,j}},Y_{i,j})-Y_{max_{i,j}}+\epsilon_{i,j}| <\frac{\epsilon_{i,j}\gamma}{\epsilon_{i,j}+Y_{max_{i,j}}}\notag \\
&+\left| \frac{(Y_{max_{i,j}}-\epsilon_{i,j}+\frac{\epsilon_{i,j}\gamma}{\epsilon_{i,j}+Y_{max_{i,j}}})(Y_{max_{i,j}}-Y_{min_{i,j}}-2\epsilon_{i,j}+\frac{\epsilon_{i,j}\gamma}{\epsilon_{i,j}+Y_{max_{i,j}}})(\frac{\epsilon_{i,j}\gamma}{\epsilon_{i,j}+Y_{max_{i,j}}})}{(Y_{max_{i,j}}-Y_{min_{i,j}}-\epsilon_{i,j})\epsilon_{i,j}}\right|.
\end{align}
Since $ Y_{max_{i,j}}-\epsilon_{i,j}>0 $ and $ Y_{min_{i,j}}+\epsilon_{i,j}<0 $, we have $ Y_{max_{i,j}}-Y_{min_{i,j}}-2\epsilon_{i,j}>0 $. Using these inequalities, and the fact that $ \gamma< \epsilon_{i,j}+Y_{max_{i,j}} $, (\ref{eq:cont2xx}) can be rewritten as
\begin{align}\label{eq:cont2xxxx}
&|\text{Proj}_m(\theta_{v_{i,j}},Y_{i,j})-Y_{max_{i,j}}+\epsilon_{i,j}|< \frac{(\epsilon_{i,j}+Y_{max_{i,j}})\gamma}{\epsilon_{i,j}+Y_{max_{i,j}}}= \gamma.
\end{align}
Therefore, $ \lim_{(\theta_{v_{i,j}}, Y_{i,j})\to (\theta_0, Y_{max_{i,j}}-\epsilon_{i,j})}\text{Proj}_m(\theta_{v_{i,j}}, Y_{i,j})=Y_{max_{i,j}}-\epsilon_{i,j} $ in set $ S_{2,1} $. 

Let us now consider the same limit operation in $ S_0 $. Again, for any $ \gamma>0 $, there exist a $ \delta_1=min\{\sqrt{2}\epsilon_{i,j},\frac{\sqrt{2}\epsilon_{i,j}\gamma}{\epsilon_{i,j}+Y_{max_{i,j}}}\} $ such that for $ Y_{i,j}\in (Y_{max_{i,j}}-\epsilon_{i,j}-\frac{\delta_1}{\sqrt{2}},Y_{max_{i,j}}-\epsilon_{i,j}) $ and $ \theta_{v_{i,j}}\in (\theta_0-\frac{\delta_1}{2\sqrt{2}},\theta_0+\frac{\delta_1}{2\sqrt{2}}) $, $0<\sqrt{(\theta_{i,j}-\theta_0)^2+(Y_{i,j}-Y_{max_{i,j}}+\epsilon_{i,j})^2}\leq \delta_1$. Then using $ |Y_{i,j}-Y_{max_{i,j}}+\epsilon_{i,j}|<\frac{\delta_1}{\sqrt{2}} $ we have
\begin{align}\label{eq:cont3}
|\text{Proj}_m(\theta_{v_{i,j}},Y_{i,j})-Y_{max_{i,j}}+\epsilon_{i,j}|&=|Y_{i,j}-Y_{max_{i,j}}+\epsilon_{i,j}| \notag \\
& <\frac{\delta_1}{\sqrt{2}}\leq \frac{\epsilon_{i,j}}{\epsilon_{i,j}+Y_{max_{i,j}}}\gamma<\gamma.
\end{align}
This shows that $ \lim_{(\theta_{v_{i,j}}, Y_{i,j})\to (\theta_0, Y_{max_{i,j}}-\epsilon_{i,j})}\text{Proj}_m(\theta_{v_{i,j}}, Y_{i,j})=Y_{max_{i,j}}-\epsilon_{i,j} $ in $ S_{0} $. Therefore, $ \text{Proj}_m(\theta_{v_{i,j}},Y_{i,j}) $ is continuous on the boundary of $ S_0 $ and $ S_{2,1} $.

		Consider now the boundary between $ S_{1,1} $ and  $ S_{3,1} $ (see Figure \ref{fig:prf}). Let the point $ (\theta_1,Y_{max_{i,j}}-\epsilon_{i,j}) $ be an arbitrary point on the boundary of $ S_{1,1} $ and $ S_{3,1} $. Notice that since $ S_{3,1} $ is a closed set, the points on the boundary of $ S_1 $ and $ S_3 $ belong to $ S_3 $. We should show that the limit of $ \text{Proj}_m(\theta_{v_{i,j}},Y_{i,j}) $ when $ (\theta_{v_{i,j}},Y_{i,j}) $ approaches $ (\theta_1,Y_{max_{i,j}}-\epsilon_{i,j}) $ in $ S_{3,1} $ leads to the same value as $ (\theta_{v_{i,j}},Y_{i,j}) $ approaches to $ (\theta_1,Y_{max_{i,j}}-\epsilon_{i,j}) $ in $ S_{1,1} $, and this value is equal to $ \text{Proj}_m(\theta_1,Y_{max_{i,j}}-\epsilon_{i,j})=(Y_{max_{i,j}}-\epsilon_{i,j})(1-\hat{h}(Y_{max_{i,j}}-\epsilon_{i,j}))(1-\hat{f}({\theta_1}))=(Y_{max_{i,j}}-\epsilon_{i,j})(1-\hat{f}({\theta_1})) $.
	
	First, consider the limit in $ S_{3,1} $. For any $ \gamma>0 $, there exists $ \delta_2=min\{\sqrt{2}\epsilon_{i,j},\sqrt{2}\gamma X^{-1}\} $, where $ X=1+\hat{f}(\theta_1)+(\frac{2\theta_1-\theta_{max_{i,j}}-\theta_{min_{i,j}}+\epsilon_{i,j}\slash 2}{2(\theta_{max_{i,j}}-\theta_{min_{i,j}}-\zeta_{i,j})\zeta_{i,j}})Y_{max_{i,j}} $, such that
	 for $ Y_{i,j}\in (Y_{max_{i,j}}-\epsilon_{i,j},Y_{max_{i,j}}-\epsilon_{i,j}+\frac{\delta_2}{\sqrt{2}}) $ and $ \theta_{v_{i,j}}\in (\theta_1-\frac{\delta_2}{2\sqrt{2}},\theta_1+\frac{\delta_2}{2\sqrt{2}}) $, $0<\sqrt{(\theta_{i,j}-\theta_0)^2+(Y_{i,j}-Y_{max_{i,j}}+\epsilon_{i,j})^2}\leq \delta_2$. Then, we have
	\begin{align}\label{eq:cont4}
	&|\text{Proj}_m(\theta_{v_{i,j}},Y_{i,j})-(Y_{max_{i,j}}-\epsilon_{i,j})(1-\hat{f}(\theta_1))|\notag\\ &=|Y_{i,j}(1-\hat{f}_{i,j})(1-\hat{h}_{i,j})-(Y_{max_{i,j}}-\epsilon_{i,j})(1-\hat{f}(\theta_1))| \notag \\
	&\leq|Y_{i,j}(1-\hat{f}(\theta_1-\frac{\delta_2}{2\sqrt{2}}))-(Y_{max_{i,j}}-\epsilon_{i,j})(1-\hat{f}(\theta_1+\frac{\delta_2}{2\sqrt{2}}))|\notag \\
	&\leq |Y_{i,j}-Y_{max_{i,j}}+\epsilon_{i,j}|+|Y_{i,j}\hat{f}(\theta_1-\frac{\delta_2}{2\sqrt{2}})-(Y_{max_{i,j}}-\epsilon_{i,j})\hat{f}(\theta_1+\frac{\delta_2}{2\sqrt{2}})|\notag \\
	&< \frac{\delta_2}{\sqrt{2}}+|(Y_{max_{i,j}}-\epsilon_{i,j}+\frac{\delta_2}{\sqrt{2}})\hat{f}(\theta_1-\frac{\delta_2}{2\sqrt{2}})-(Y_{max_{i,j}}-\epsilon_{i,j})\hat{f}(\theta_1+\frac{\delta_2}{2\sqrt{2}})|.
	\end{align}
	It can be shown that $ \hat{f}(\theta_1-\frac{\delta_2}{2\sqrt{2}})=\hat{f}(\theta_1)-\frac{\delta_2}{\sqrt{2}}(\frac{2\theta_1-\theta_{max_{i,j}}-\theta_{min_{i,j}}+\frac{\delta_2}{2\sqrt{2}}}{2(\theta_{max_{i,j}}-\theta_{min_{i,j}}-\zeta_{i,j})\zeta_{i,j}}) $ and $ \hat{f}(\theta_1+\frac{\delta_2}{2\sqrt{2}})=\hat{f}(\theta_1)+\frac{\delta_2}{\sqrt{2}}(\frac{2\theta_1-\theta_{max_{i,j}}-\theta_{min_{i,j}}+\frac{\delta_2}{2\sqrt{2}}}{2(\theta_{max_{i,j}}-\theta_{min_{i,j}}-\zeta_{i,j})\zeta_{i,j}}) $. Therefore, an upper bound on (\ref{eq:cont4}) can be obtained as
	\begin{align}\label{eq:cont4x}
	&|\text{Proj}_m(\theta_{v_{i,j}},Y_{i,j})-(Y_{max_{i,j}}-\epsilon_{i,j})(1-\hat{h}(Y_{max_{i,j}}-\epsilon_{i,j}))(1-\hat{f}(\theta_1))|\notag\\
	&< \frac{\delta_2}{\sqrt{2}}+\frac{\delta_2}{\sqrt{2}}\hat{f}(\theta_1)+\frac{\delta_2}{\sqrt{2}}(\frac{2\theta_1-\theta_{max_{i,j}}-\theta_{min_{i,j}}+\frac{\delta_2}{2\sqrt{2}}}{2(\theta_{max_{i,j}}-\theta_{min_{i,j}}-\zeta_{i,j})\zeta_{i,j}})(Y_{max_{i,j}}-\epsilon_{i,j}+\frac{\delta_2}{\sqrt{2}}).
	\end{align}
	Using the definition of $ \delta_2 $, and the fact that $\theta_{max_{i,j}}-\zeta_{i,j}>0$ and $\theta_{min_{i,j}}+\zeta_{i,j}<0$, an upper bound on (\ref{eq:cont4x}) can be obtained as
	\begin{align}\label{eq:cont4xx}
	&|\text{Proj}_m(\theta_{v_{i,j}},Y_{i,j})-(Y_{max_{i,j}}-\epsilon_{i,j})(1-\hat{h}(Y_{max_{i,j}}-\epsilon_{i,j}))(1-\hat{f}(\theta_1))|\notag\\
	&< \frac{\delta_2}{\sqrt{2}}((1+\hat{f}(\theta_1)+(\frac{2\theta_1-\theta_{max_{i,j}}-\theta_{min_{i,j}}+\frac{\epsilon_{i,j}}{2}}{2(\theta_{max_{i,j}}-\theta_{min_{i,j}}-\zeta_{i,j})\zeta_{i,j}})Y_{max_{i,j}})\leq \gamma.
	\end{align}
	This shows that $ \lim_{(\theta_{v_{i,j}}, Y_{i,j})\to (\theta_1, Y_{max_{i,j}}-\epsilon_{i,j})}\text{Proj}_m(\theta_{v_{i,j}}, Y_{i,j})=(Y_{max_{i,j}}-\epsilon_{i,j})(1-\hat{f}(\theta_1)) $, in set $ S_{3,1} $.
	
	Now, consider taking the same limit in $ S_{1,1} $. For $ Y_{i,j}\in (Y_{max_{i,j}}-\epsilon_{i,j}-\frac{\delta_2}{\sqrt{2}},Y_{max_{i,j}}-\epsilon_{i,j}) $ and $ \theta_{v_{i,j}}\in (\theta_1-\frac{\delta_2}{2\sqrt{2}},\theta_1+\frac{\delta_2}{2\sqrt{2}}) $, $0<\sqrt{(\theta_{i,j}-\theta_0)^2+(Y_{i,j}-Y_{max_{i,j}}+\epsilon_{i,j})^2}\leq \delta_2$. Then, we have
	\begin{align}\label{eq:cont5}
	&|\text{Proj}_m(\theta_{v_{i,j}},Y_{i,j})-(Y_{max_{i,j}}-\epsilon_{i,j})(1-\hat{h}(Y_{max_{i,j}}-\epsilon_{i,j}))(1-\hat{f}(\theta_1))|\notag \\
	&=|Y_{i,j}(1-\hat{f}_{i,j})-(Y_{max_{i,j}}-\epsilon_{i,j})(1-\hat{f}(\theta_1))|\notag \\
	&\leq|Y_{i,j}(1-\hat{f}(\theta_1-\frac{\delta_2}{2\sqrt{2}}))-(Y_{max_{i,j}}-\epsilon_{i,j})(1-\hat{f}(\theta_1+\frac{\delta_2}{2\sqrt{2}}))|.
	\end{align}
	Using the same procedure as (\ref{eq:cont4})-(\ref{eq:cont4xx}), it can be shown that $ |\text{Proj}_m(\theta_{v_{i,j}},Y_{i,j})-(Y_{max_{i,j}}-\epsilon_{i,j})(1-\hat{h}(Y_{max_{i,j}}-\epsilon_{i,j}))(1-\hat{f}(\theta_1))|<\gamma $. Therefore, $ \text{Proj}_m(\theta_{v_{i,j}},Y_{i,j}) $ is continuous on the boundary of $ S_{1,1} $ and $ S_{3,1} $.
	
	Continuity of the proposed projection function on the other boundaries can be proved following the same procedure as above. Therefore, $ \text{Proj}_m(\theta_{v_{i,j}},Y_{i,j}) $ is continuous on $ S $.
\end{proof}

 The final step before presenting the main theorem of this study is showing that the solution of the differential equation providing the parameter adaptation law $ \dot{\theta}_{v_{i,j}}=\text{Proj}_m({\theta}_{v_{i,j}}, Y_{i,j}) $, actually exists and is unique. Considering that $ \theta_{v_{i,j}} $ and $ Y_{i,j} $ are piecewise continuous functions of time, it is enough to prove that $ \text{Proj}_m(\theta_{v_{i,j}}, Y_{i,j}) $ is locally Lipschitz to show existence and uniqueness.
%
 
\begin{lemma}\label{lem7}
	The function $\emph{Proj}_m({\theta}_{v_{i,j}}, Y_{i,j}):S_{\theta} \times S_{Y} \rightarrow \mathbb{R}$, where $ S_{\theta}, S_{Y}\subset \mathbb{R} $, is locally Lipschitz.
\end{lemma}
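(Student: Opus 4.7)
The plan is to leverage the decomposition of the feasible domain $S = S_{\theta}\times S_{Y}$ already introduced in Lemma \ref{lem7x} and combine it with a standard gluing argument for piecewise smooth maps. I would fix an arbitrary point $(\theta_0,Y_0)\in S$ and choose a bounded convex neighborhood $U$ of this point whose closure $\bar U$ is compact. Since the subsets $S_0,S_1,S_2,S_3$ (together with their further splittings $S_{k,\eta}$) are defined by sign conditions on the quadratics $f_{i,j}$ and $h_{i,j}$ and on $Y_{i,j}\frac{df_{i,j}}{d\theta_{v_{i,j}}}$, their closures intersect $\bar U$ in finitely many closed pieces whose boundaries are arcs of conic sections.

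Next I would check the Lipschitz property on each closed piece separately. Inside each region, $\text{Proj}_m$ equals one of the four algebraic expressions $Y_{i,j}$, $Y_{i,j}(1-\hat f_{i,j})$, $Y_{i,j}(1-\hat h_{i,j})$, or $Y_{i,j}(1-\hat f_{i,j})(1-\hat h_{i,j})$. The truncations $\hat f_{i,j}=\min\{1,f_{i,j}\}$ and $\hat h_{i,j}=\min\{1,h_{i,j}\}$ are pointwise minima of two $C^{\infty}$ functions and are therefore Lipschitz on $\bar U$ with constants not exceeding the suprema of $|df_{i,j}/d\theta_{v_{i,j}}|$ and $|dh_{i,j}/dY_{i,j}|$ over $\bar U$. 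Since products of bounded Lipschitz functions on a compact set are Lipschitz, each of the four algebraic expressions admits a finite Lipschitz constant on $\bar U$; let $L$ be the maximum of these four constants.

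To pass from the per-region bound to local Lipschitzness on $U$, I would invoke convexity of $U$. Given $p,q\in U$, the line segment $[p,q]$ meets the region boundaries in finitely many points because each boundary is a level set of a quadratic polynomial and therefore intersects any line in at most two points. Labelling these intersections $p=p_0,p_1,\ldots,p_N=q$ so that each sub-segment $[p_{k-1},p_k]$ lies in the closure of a single region, the per-region estimate together with the continuity of $\text{Proj}_m$ established in Lemma \ref{lem7x} gives
\begin{equation}
|\text{Proj}_m(p)-\text{Proj}_m(q)| \;\leq\; \sum_{k=1}^{N}|\text{Proj}_m(p_{k-1})-\text{Proj}_m(p_k)| \;\leq\; L\sum_{k=1}^{N}|p_{k-1}-p_k| \;=\; L\,|p-q|,
\end{equation}
so $\text{Proj}_m$ is Lipschitz on $U$ with constant $L$. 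Since $(\theta_0,Y_0)$ was arbitrary, the local Lipschitz property holds throughout $S$.

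The main obstacle I foresee is purely bookkeeping: one must verify that the segment $[p,q]$ actually enters only finitely many regions and that the boundary values of $\text{Proj}_m$ from adjacent regions coincide so that the telescoping sum above is legitimate. The first point is immediate from the quadratic nature of $f_{i,j}$ and $h_{i,j}$, while the second is precisely the content of Lemma \ref{lem7x}. A secondary care is required for the strict-inequality cases in the definition of $\text{Proj}_m$ (for instance Case~2, where $f_{i,j}>0$ rather than $\geq 0$): one works with the closures of the corresponding sets and uses the continuity lemma to assign unambiguous values at shared boundary points, after which the chain estimate above proceeds unchanged.
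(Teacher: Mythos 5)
Your proposal is correct and follows essentially the same route as the paper: decompose $S$ into the regions of Lemma \ref{lem7x}, establish a Lipschitz constant for each of the four algebraic expressions on a compact set, and handle pairs of points in different regions by splitting the connecting segment at the (finitely many) boundary crossings and telescoping with the triangle inequality, using the continuity result to match values on the shared boundaries. The only difference is organizational: you treat all boundary-crossing configurations uniformly with a single telescoping sum, whereas the paper separates them into the neighboring and non-neighboring cases (its Cases 3 and 4).
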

\begin{proof}

	In order to prove that a function $ g:D\subset \mathbb{R}^n \rightarrow \mathbb{R}^m $  is locally Lipschitz, it must be shown that there exists a positive constant $ K $ such that $ ||g(x)-g(y)||\leq K||x-y|| $, for any $ x,y \in D\subset \mathbb{R}^n $. Let $ a^1\equiv ( Y_{i,j}^1, \theta_{v_{i,j}}^1)\in S\subset \mathbb{R}^2 $ and $ a^0\equiv ( Y_{i,j}^0, \theta_{v_{i,j}}^0)\in S\subset \mathbb{R}^2 $, where $ S $ is given as $ S=S_{\theta} \times S_{Y}\subset \mathbb{R}^2 $. Furthermore, let $ a^{\mu}=( Y_{i,j}^{\mu}, \theta_{v_{i,j}}^{\mu}) $, $  \mu \in [0,1] $, be any point on the line connecting $ a^0 $ and $ a^1 $, which satisfy
	\begin{align}
	&Y_{i,j}^{\mu}=\mu Y_{i,j}^1+(1-\mu)Y_{i,j}^0,\label{eq:e1}\\
	&\theta_{i,j}^{\mu}=\mu \theta_{v_{i,j}}^1+(1-\mu)\theta_{v_{i,j}}^0.\label{eq:e2}
	\end{align}
	The Lipschitz condition needs to be investigated for $ 4 $ different cases, which are given below. The subsets of $ S $, defined in (\ref{eq:subsets}) and demonstrated in Figure \ref{fig:prf}, are used throughout the proof.
	
	Case 1: If for all $ \mu\in [0, 1] $, $ a^{\mu} $ lies in the set $ S_0 $, then, using (\ref{eq:17nx}), it can be shown that
	\begin{align}
	|\text{Proj}_m(a^1)-\text{Proj}_m(a^0)|&=|Y_{i,j}^1-Y_{i,j}^0|\notag\\
	&\leq |Y_{i,j}^1-Y_{i,j}^0|+|\theta_{v_{i,j}}^1-\theta_{v_{i,j}}^0|\notag \\
	&\leq k_0||a^1-a^0||,
	\end{align}
	where $ k_0 $ is a positive constant. This satisfies the Lipschitz condition on $ S_0 $.
	
	Case 2: If for all $ \mu\in [0, 1] $, $ a^{\mu} $ lies in the set $ S_{3,1} $, then
	\begin{align}\label{eq:e1xx}
	|\text{Proj}_m(a^1)-\text{Proj}_m(a^0)|&=|Y_{i,j}^1(1-\hat{f}_{i,j}^1)(1-\hat{h}_{i,j}^1)\notag \\
	&-Y_{i,j}^0(1-\hat{f}_{i,j}^0)(1-\hat{h}_{i,j}^0)|,
	\end{align}
	where $ \hat{f}_{i,j}^{\ell}=\hat{f}(\theta_{v_{i,j}}^{\ell}) $ and $ \hat{h}_{i,j}^{\ell}=\hat{h}(Y_{{i,j}}^{\ell}) $ for $ \ell=\{0, 1\} $. Using (\ref{eq:18n}) and (\ref{eq:18nxx}), it can be shown that there exist positive constants $ k_{\theta 0} $ and $ k_{Y0} $ such that
	\begin{align}
	|\hat{f}_{i,j}^1-\hat{f}_{i,j}^0|&<k_{\theta 0}|\theta_{v_{i,j}}^1-\theta_{v_{i,j}}^0|\label{eq:e11} \\
	|\hat{h}_{i,j}^1-\hat{h}_{i,j}^0|&<k_{Y0}|Y_{{i,j}}^1-Y_{{i,j}}^0|.\label{eq:e12}
	\end{align}
	Using (\ref{eq:e11}) and (\ref{eq:e12}), an upper bound on (\ref{eq:e1xx}) can be obtained as
	\begin{align}
	|\text{Proj}_m(a^1)-\text{Proj}_m(a^0)|&\leq k_{Y1}|Y_{i,j}^1-Y_{i,j}^0|+k_{\theta 1}|\theta_{v_{i,j}}^1-\theta_{v_{i,j}}^0|\notag \\
	&\leq k_1||a^1-a^0||,
	\end{align}
	where $ k_{\theta 1} $, $ k_{Y1} $ and $ k_1 $ are positive constants. The same procedure can be followed for each subsets of $ S_{1} $, $ S_{2} $ and $ S_3 $, and therefore the Lipschitz condition is satisfied on each subsets of $ S_1, S_2 $ and $ S_3 $. 
	

	Case 3: If $ a^0 $ and $ a^1 $ are in two neighboring subsets of $ S $, then the following analysis can be conducted: Let $ a^1 $ belong to $ S_{3,1} $ and $ a^0 $ to $ S_{1,1} $. Then, the segment $ [ a^0, a^1 ] $ can be divided into two segments $ [a^1, a^{\mu^*}]\in S_{3,1} $ and $ (a^{\mu^*}, a^0]\in S_{1,1} $, where
	\begin{align}\label{eq:e4}
	\mu^*&=\min \ \mu,\notag \\
	\text{s.t.}\ \ \ \mu&\in[0, 1] \ \text{and}\ 
	a^{\mu}\in S_{3,1}.
	\end{align}
	Using the mean value theorem in $ S_{1,1}\setminus \partial S_{1,1}  $, where $ \partial S_{1,1} $ denotes the boundary of $ S_{1,1} $, and using (\ref{eq:e1}) and (\ref{eq:e2}), we obtain that
	\begin{align}\label{eq:e3}
	|\text{Proj}_m(a^{\mu^*})-\text{Proj}_m(a^0)|&\leq k_2^{'}(|Y_{i,j}^{\mu^*}-Y_{i,j}^0|+|\theta_{v_{i,j}}^{\mu^*}-\theta_{v_{i,j}}^0|)\notag \\
	&\leq k_2^{''}(|Y_{i,j}^{1}-Y_{i,j}^0|+|\theta_{v_{i,j}}^{1}-\theta_{v_{i,j}}^0|), 
	%
	\end{align}
	where $ k_2^{'} $ and $ k_2^{''} $ are positive constants. Also, following the procedure in Case 2, it can be shown that $ |\text{Proj}_m(a^1)-\text{Proj}_m(a^{\mu^*})|\leq k_1||a^1-a^0|| $. Therefore, using the triangle inequality, we get
	\begin{align}
	|\text{Proj}_m(a^1)-\text{Proj}_m(a^0)|&\leq |\text{Proj}_m(a^1)-\text{Proj}_m(a^{\mu^*})|\notag \\
	&+|\text{Proj}_m(a^{\mu^*})-\text{Proj}_m(a^0)|\notag \\
	&\leq k_2||a^1-a^0||,
	\end{align}
	where $ k_2 $ is a positive constant. The same procedure can be used for the other two neighboring subsets.
	
	Case 4: If $ a^0 $ and $ a^1 $ are in two non-neighboring subsets of $ S $, then the following analysis can be conducted: Let $ a^0 $ belong to $ S_{1,1} $ and $ a^1 $ to $ S_{2,1} $. Then, the segment $ [ a^0, a^1 ] $ can be divided into three segments $ [a^0, a^{\alpha^*})\in S_{1,1} $, $ [a^{\alpha^*}, a^{\beta^*}]\in S_0 $, and $ (a^{\beta^*}, a^{a^1}]\in S_{2,1} $, where $ \alpha^* $ and $ \beta^* $ are defined as
	\begin{align}
	\alpha^*&=\min \ \mu\notag \\
	\text{s.t.}\ \ \ \mu&\in[0, 1]\ \text{and}\ 
	a^{\mu}\in S_0,
	\end{align}
	and
	\begin{align}
	\beta^*&=\max \ \mu\notag \\
	\text{s.t.}\ \ \ \mu&\in[0, 1]\ \text{and}\
	a^{\mu}\in S_0.
	\end{align}
	Then, the same procedure used in Case 3 can be followed to obtain the Lipschitz condition.
	
	Since the Lipschitz condition is satisfied for any two points $ a^0, a^1\in S $, the projection algorithm is locally Lipschitz on $ S $.
\end{proof}

After defining the modified projection algorithm, proving its properties that will be useful in the stability analysis of the closed loop system, and proving the existence and uniqueness of the solution of the differential equation describing the algorithm, we provide the main theorem below, stating that when the proposed projection algorithm is employed, all the signals in the adaptive control allocation system, in the presence of actuator magnitude and rate saturation, remains bounded and the control allocation error converges to a predetermined closed set. 
\begin{theorem}\label{thm2}
	Consider the actuator command signal $ u $ produced by the adaptive control allocation  (\ref{eq:aloc}) with $ g(\theta_{v}, Y(v_s, e))=\Gamma\text{Proj}_m(\theta_{v}, Y(v_s,e)) $, where $ \Gamma $ is a diagonal positive definite matrix and the projection operator is defined in (\ref{eq:17nx}) with convex functions (\ref{eq:18n}) and (\ref{eq:18nxx}).
	If $ Y=-v_se^TPB $, where $P$ is the positive definite symmetric matrix solution of the Lyapunov equation $ A_m^{T}P + PA_m = -Q $ with a symmetric positive definite matrix $Q$, then $ \tilde{\theta}_v $ and $ e $ remain bounded and converge to the compact set
	\begin{align} 
	E_2=\{ (e,\tilde{\theta}_v): ||e||^2\leq  \frac{2||\tilde{\theta}_v||_F^2||Y_{MAX}||_F}{\lambda_{min}(Q)}, ||\tilde{\theta}||\leq \tilde{\theta}_{max} \}.
	\end{align}
	Moreover, the design parameters $ \theta_{min_{i,j}} $, $ \theta_{max_{i,j}} $, $ Y_{min_{i,j}} $ and $ Y_{max_{i,j}} $ in (\ref{eq:18n}) and (\ref{eq:18nxx}) can be chosen such that for $ v_s\in \Omega_v=\{ v| -M_i\leq v_i\leq  M_i, -L_i\leq \dot{v}_i \leq L_i, i=1,..., r \} $, where $ M_i $ and $ L_i $ are positive scalars for $ i=1,..., r $, $ u $ remains in $ \Omega_u=\{u| u_{\text{min}_{j}}\leq u_j\leq u_{\text{max}_{j}}, \bar{u}_{\text{min}_{j}}\leq \dot{u}_j\leq \bar{u}_{\text{max}_{j}}, j=1,...,m \} $, where $ u_{min_j} $, $ u_{max_j} $, $ \bar{u}_{min_j} $, $ \bar{u}_{max_j} $ are actuator magnitude and rate constraints.
\end{theorem}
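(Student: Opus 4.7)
The plan is to combine a standard Lyapunov argument for model reference adaptive systems with the new projection inequality from Lemma \ref{lem6}, and then use the bounds that Lemma \ref{lem5} enforces on both $\theta_v$ and $Y$ to translate the design parameters into actuator constraints.

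First, I would derive the tracking error dynamics. Subtracting (\ref{eq:5}) from (\ref{eq:4}) and substituting $u=\theta_v^T v_s$ gives $\dot{e}=A_m e + B\Lambda \theta_v^T v_s - v_s$. Assuming an ideal parameter $\theta_v^*$ lying in the projection interior such that $B\Lambda {\theta_v^*}^T v_s = v_s$ pointwise (which is the standard matching condition for this allocator, and the reason the magnitudes $\theta_{\max_{i,j}},\theta_{\min_{i,j}}$ can be chosen large enough to contain a feasible $\theta_v^*$), this reduces to $\dot{e}=A_m e + B\Lambda \tilde{\theta}_v^T v_s$, where $\tilde{\theta}_v=\theta_v-\theta_v^*$. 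I would then propose the Lyapunov candidate
\begin{equation*}
V(e,\tilde{\theta}_v)=e^T P e + \text{tr}\bigl(\tilde{\theta}_v \Lambda \tilde{\theta}_v^T \Gamma^{-1}\bigr),
\end{equation*}
which is positive definite since $\Lambda, \Gamma, P$ are positive definite.

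Differentiating $V$, using the Lyapunov equation $A_m^T P + PA_m=-Q$, the cyclic property of the trace, and the identification $Y=-v_s e^T P B$, the cross term $2e^T P B\Lambda\tilde{\theta}_v^T v_s$ combines with $2\,\text{tr}(\tilde{\theta}_v\Lambda\,\text{Proj}_m(\theta_v,Y)^T\Gamma^{-1}\Gamma)$ to produce $2\,\text{tr}\bigl(\tilde{\theta}_v^T(-Y+\text{Proj}_m(\theta_v,Y))\Lambda\bigr)$. Applying Lemma \ref{lem6} element-wise, together with the positivity of the diagonal $\Lambda$, yields
\begin{equation*}
\dot{V}\le -\lambda_{\min}(Q)\|e\|^2 + 2\lambda_{\max}(\Lambda)\|\tilde{\theta}_{\max}\|_F\|Y_{MAX}\|_F.
\end{equation*}
Hence $\dot V<0$ outside the set where $\|e\|^2$ matches the stated threshold, which by standard Lyapunov arguments gives boundedness of $e$ and $\tilde{\theta}_v$ and convergence to the compact set $E_2$. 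The bound $\|\tilde{\theta}\|\le\tilde{\theta}_{\max}$ in $E_2$ is inherited directly from Lemma \ref{lem5}.

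For the second statement, I would exploit the invariance properties guaranteed by Lemma \ref{lem5}: $\theta_{v_{i,j}}\in[\theta_{\min_{i,j}},\theta_{\max_{i,j}}]$ and $Y_{i,j}\in[Y_{\min_{i,j}},Y_{\max_{i,j}}]$ for all $t\ge 0$. Combining these with the hypothesis $v_s\in\Omega_v$, I would bound the components of $u=\theta_v^T v_s$ by $|u_j|\le\sum_{i=1}^r \max(|\theta_{\min_{i,j}}|,|\theta_{\max_{i,j}}|)M_i$, and the components of $\dot{u}=\dot{\theta}_v^T v_s+\theta_v^T\dot{v}_s$ by $|\dot u_j|\le \sum_{i=1}^r\Gamma_{ii}\max(|Y_{\min_{i,j}}|,|Y_{\max_{i,j}}|)M_i+\sum_{i=1}^r\max(|\theta_{\min_{i,j}}|,|\theta_{\max_{i,j}}|)L_i$. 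Choosing the four sets of design parameters $\theta_{\min_{i,j}},\theta_{\max_{i,j}},Y_{\min_{i,j}},Y_{\max_{i,j}}$ small enough in absolute value to make these sums fit inside $[u_{\min_j},u_{\max_j}]$ and $[\bar{u}_{\min_j},\bar{u}_{\max_j}]$ respectively completes the argument.

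The main obstacle I anticipate is the first step: verifying that the cross term in $\dot V$ collapses cleanly through Lemma \ref{lem6} once the diagonal $\Lambda$ is inserted. Unlike the magnitude-only case, here $\text{Proj}_m$ produces the product factor $(1-\hat f_{i,j})(1-\hat h_{i,j})$ rather than the single factor $(1-f_{i,j})$, so one must check that the bound in Lemma \ref{lem6} still dominates after multiplication by the diagonal entries of $\Lambda$; the commutativity of a diagonal $\Lambda$ with the element-wise structure of $\text{Proj}_m$ is what makes this work, but this verification is the crux. Everything else (invariance, choice of design parameters, invocation of standard Lyapunov boundedness) is a matter of bookkeeping on top of Lemmas \ref{lem5}--\ref{lem7}.
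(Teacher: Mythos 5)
Your proposal follows essentially the same route as the paper's proof: the same error dynamics via the matching condition $B\Lambda\theta_v^{*T}=I$, the same Lyapunov candidate $V=e^TPe+\mathrm{tr}(\tilde{\theta}_v^T\Gamma^{-1}\tilde{\theta}_v\Lambda)$, the same collapse of the cross term through $Y=-v_se^TPB$ and Lemma \ref{lem6}, and the same appeal to Lemma \ref{lem5} for the parameter-selection part. If anything you are slightly more careful than the paper, which silently drops the diagonal $\Lambda$ when invoking Lemma \ref{lem6} (your $\lambda_{\max}(\Lambda)$ factor) and leaves the final choice of $\theta_{\min_{i,j}},\theta_{\max_{i,j}},Y_{\min_{i,j}},Y_{\max_{i,j}}$ as an assertion rather than the explicit component-wise bounds on $u_j$ and $\dot{u}_j$ that you write out.
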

\begin{proof}
Substituting (\ref{eq:6}) into (\ref{eq:4}), we obtain that
\begin{equation} \label{eq:7}
\dot{\xi}=A_m\xi+(B\Lambda{\theta}_v^T-I)v_s.
\end{equation}
It is assumed that there exists an ideal adaptive parameter, $\theta_v^*$, such that
\begin{equation} \label{eq:8n}
B\Lambda{\theta}_v^{*T}=I.
\end{equation}
Since $ B\Lambda $ is a full row rank matrix, this assumption is always valid.
Defining $\theta_v^T=\theta_v^{*T}+\tilde{\theta}_v^T$, where $\tilde{\theta}_v^T$ is the deviation of $\theta_v^T$ from its ideal value, (\ref{eq:7}) can be rewritten as
\begin{equation} \label{eq:8}
\dot{\xi}=A_m\xi+B\Lambda \tilde{\theta}_v^Tv_s.
\end{equation}
Using (\ref{eq:5}) and (\ref{eq:8}), the error dynamics is obtained as
\begin{equation} \label{eq:9}
\dot{e}=A_me+B\Lambda \tilde{\theta}_v^Tv_s.
\end{equation}
Consider a Lyapunov function candidate
\begin{equation} \label{eq:10}
V=e^TPe+tr(\tilde{\theta}_v^T\Gamma^{-1}\tilde{\theta}_v\Lambda).
\end{equation}
The derivative of $ V $ along the trajectories of (\ref{eq:aloc}) can be calculated as
\begin{align} \label{eq:12}
\dot{V}=&e^T(A_m^TP + PA_m)e + 2e^TPB\Lambda\tilde{\theta}_v^Tv_s+ 2tr(\tilde{\theta}_v^T\Gamma^{-1}\dot{\tilde{\theta}}_v\Lambda)\notag \\
=&-e^TQe + 2e^TPB\Lambda\tilde{\theta}_v^Tv_s + 2tr(\tilde{\theta}_v^T\Gamma^{-1}\dot{\tilde{\theta}}_v\Lambda).
\end{align}
Using the property of the trace operation $a^Tb=tr(ba^T)$ where $a$ and $b$ are vectors, (\ref{eq:12}) can be rewritten as
\begin{equation} \label{eq:13}
\dot{V}=-e^TQe+2tr(\tilde{\theta}_v^T(v_se^TPB + \Gamma^{-1}\dot{\tilde{\theta}}_v)\Lambda).
\end{equation}
Substituting modified adaptive control law (\ref{eq:55}) into (\ref{eq:13}), the derivative of the Lyapunov function candidate is obtained as
	\begin{align} 
	&\dot{V}=-e^TQe+2tr(\tilde{\theta}_v^T(v_se^TPB + \text{Proj}_m(\theta_v, -v_se^TPB))\Lambda).
	\end{align}
	By using Lemma \ref{lem6}, we get
	\begin{align} 
	&\dot{V}\leq -\lambda_{min}(Q)||e||^T+2||\tilde{\theta}_v||_F^2||Y_{MAX}||_F,
	\end{align}
	where $ \lambda_{min}(\cdot) $ denotes the minimum eigenvalue. $ \dot{V}\leq 0 $ for $ ||e||^2\geq  (2||\tilde{\theta}_v||_F^2||Y_{MAX}||_F)/(\lambda_{min}(Q)) $.
	Therefore, for any initial conditions $ e(0) $ and $ \tilde{\theta}_{v}(0) $, if $ ||\tilde{\theta}_{v}(0)||\leq \tilde{\theta}_{max} $, where $ \tilde{\theta}_{max} $ is the predetermined upper bound for $ \tilde{\theta}_v $, $ e(t) $ and $ \tilde{\theta}_{v}(t) $ are bounded for all $ t \geq 0 $ and their trajectories converge to the following compact set (\cite{NarAnn12}),
	\begin{align} 
	E_2=\{ (e,\tilde{\theta}_v): ||e||^2\leq  \frac{2||\tilde{\theta}_v||_F^2||Y_{MAX}||_F}{\lambda_{min}(Q)}, ||\tilde{\theta}||\leq \tilde{\theta}_{max} \}.
	\end{align}
	Using Lemma \ref{lem5}, if the initial conditions are defined as $\theta_{v_{i,j}}(0)\in \Omega_{i,j}=\{\theta_{v_{i,j}}\in \mathbb{R}|f(\theta_{v_{i,j}})\leq 1\}$ and $Y_{{i,j}}(0)\in \bar{\Omega}_{i,j}=\{Y_{{i,j}}\in \mathbb{R}|
	h(Y_{{i,j}})\leq 1\}$, then $\theta_{v_{i,j}}(t)\in \Omega_{i,j}$ and $Y_{{i,j}}(t)\in \bar{\Omega}_{i,j}$ for all $ t\geq 0$.
	For a bounded $ v_s\in \Omega_v $, suitable values of $ \theta_{max_{i,j}} $, $ \theta_{min_{i,j}} $, $ Y_{max_{i,j}} $ and $ Y_{min_{i,j}} $ can be found to be used in $ f(\theta_{i,j}) $ and $ h(Y_{i,j}) $ that ensure $u_j\in [u_{\text{min}_{j}},u_{\text{max}_{j}}]$ and $\dot{u}_j\in [\bar{u}_{\text{min}_{j}},\bar{u}_{\text{max}_{j}}]$, $ j=1, ..., m $ for all $ t\geq 0 $. 
\end{proof}
\begin{remark}
	\label{rem0xxx}{It should be noted that \textit{control allocation}'s task is to distribute the total control effort produced by a \textit{controller} among redundant actuators. The investigated control allocation method and the proposed projection algorithm in this paper can be used with various different types of controllers. In this paper, a new control method is not proposed.}
\end{remark}

\begin{remark}
	\label{rem0xx}{Although the employment of the proposed projection algorithm is exemplified on an adaptive control allocation implementation, the proposed method can be extended to be used for other adaptive systems where the actuators are both magnitude and rate saturated.}
\end{remark}


\section{Application example}\label{sec6}
\subsection{ADMIRE model}
The Aerodata Model in Research Environment (ADMIRE) \citep{Har02}, which is an over-actuated aircraft model, is used for the simulations. The linearized model is given as 
\begin{equation}\label{eq:e59}
\begin{array}{ll}
\dot{x}=Ax+B_u u=Ax+B_vv_s,\\
v_s=Bu,\ \ \ B_u=B_vB,\ \ \  B_v=[0_{3\times2}\ I_{3\times3}]^T,\\
x=[\alpha \ \beta \ p \ q \ r]^T,\\
y=[p \ q \ r]^T,\\
u=[u_c \ u_{re} \ u_{le} \ u_r]^T,
\end{array}
\end{equation}
where $ \alpha,\ \beta,\ p,\ q $ and $ r $ are the angle of attack, sideslip angle, roll rate, pitch rate and yaw rate, respectively. The vector $u$ includes the commanded control surfaces' deflection. The control surfaces $ u_c,\ u_{re},\  u_{le}$ and $ u_r $ are the canard wings, right and left elevons and the rudder, respectively. The magnitude and rate limits of the commanded control surfaces are given as $ u_c\in[-55,25]\times \frac{\pi}{180}(rad)  ,u_{re},u_{le},u_r\in[-30,30]\times \frac{\pi}{180}(rad) $ and $ \dot{u}_c,\dot{u}_{re},\dot{u}_{le},\dot{u}_r\in[-40,40]\times \frac{\pi}{180}(rad/sec) $. The state and control matrices which are provided by \cite{Har02}, are given as
\begin{equation}\label{eq:e61}
\begin{array}{ll}
A= \left[ \begin{array}{cccccc}-0.5432 & 0.0137 & 0 & 0.9778 & 0 \\ 0 & -0.1179 & 0.2215 & 0 & -0.9661 \\  0 & -10.5123 & - 0.9967 & 0 & 0.6176 \\  2.6221 & -0.0030 & 0 & -0.5057 & 0 \\  0 & 0.7075 & -0.0939 & 0 & -0.2127\end{array} \right],\notag
\end{array}
\end{equation}
\vspace{-0.5cm}
\begin{equation}\label{eq:e62}
\begin{array}{ll}
B=\left[ \begin{array}{cccccccccccc}0 & -4.2423 & 4.2423 & 1.4871 \\ 1.6532 & -1.2735 & -1.2735 & 0.0024 \\  0 & -0.2805 & 0.2805 & -0.8823\end{array} \right].
\end{array}
\end{equation}

To introduce the actuator effectiveness uncertainty, we modify the model (\ref{eq:e59}) as
\begin{align}\label{eq:e59x}
\dot{x}&=Ax+B_u\Lambda u\notag \\
&=Ax+B_vB\Lambda u\notag \\
&=Ax+B_vv_s,
\end{align}
where $\Lambda \in \mathbb{R}^{4\times 4}$ is a diagonal matrix with uncertain positive elements. Substituting the allocated signal $ u $ given by (\ref{eq:6}), and using $\theta_v^T=\theta_v^{*T}+\tilde{\theta}_v^T$, (\ref{eq:e59x}) can be rewritten as
\begin{equation}\label{eq:e43xxx}
\begin{aligned}
\dot{x}&=Ax+B_vB\Lambda \theta_v^Tv_s=Ax+B_v(I+B\Lambda \tilde{\theta}_v^T)v_s,
\end{aligned} 
\end{equation}
where the total control input (see Figure 1) $ v\in \mathbb{R}^r $ can be designed using a proper control method. For the simulations conducted in this paper, we use the controller provided by \cite{TohYil19, TohYil20}.

\begin{figure}
	\begin{center}
		\includegraphics[width=14cm]{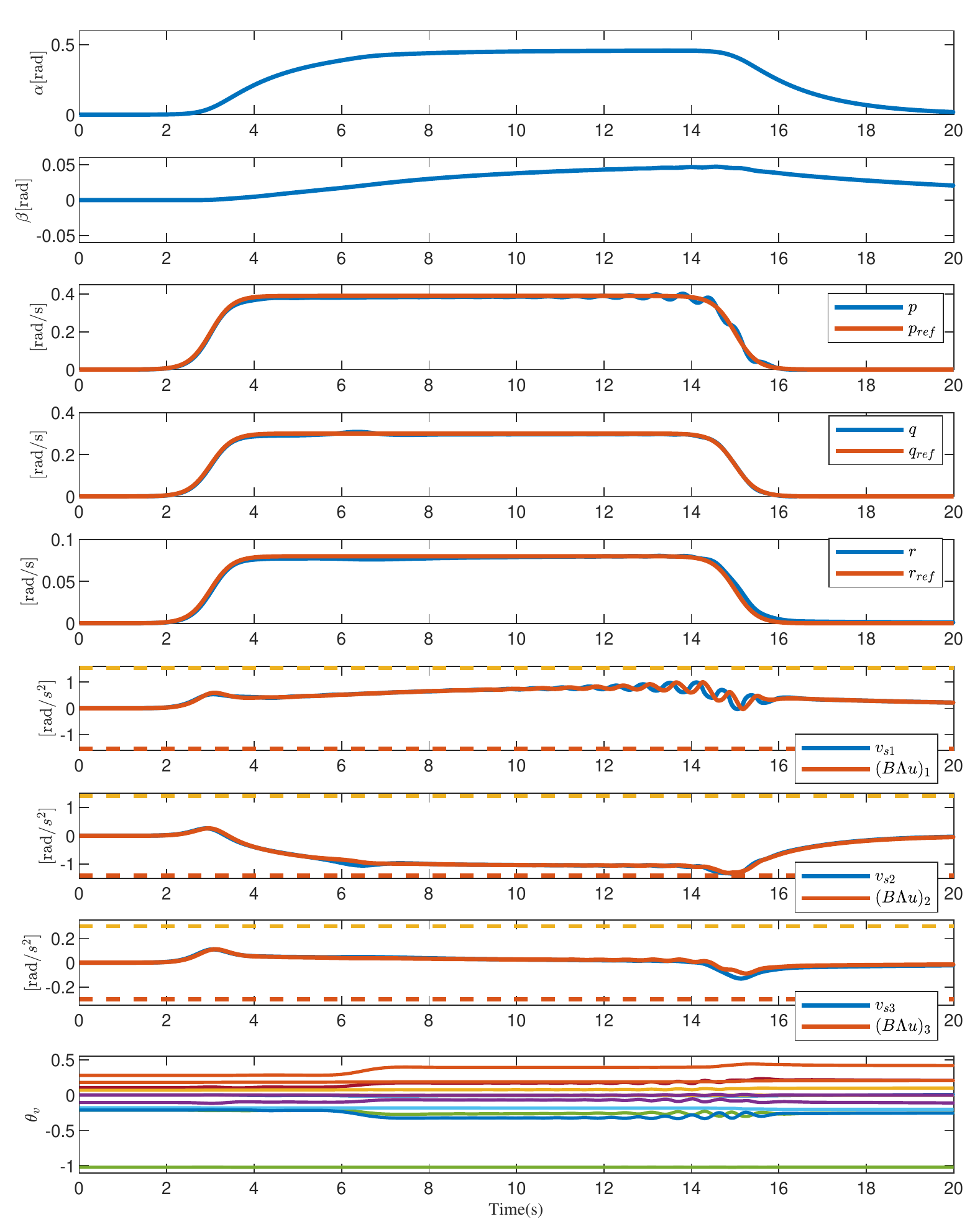}   
		\caption{Case I: Evolution of the states, total control inputs and adaptive parameters in the presence of \textbf{magnitude saturation}, using the \textbf{conventional projection method}.} 
		\label{fig:sys11}
	\end{center}
\end{figure} 
\begin{figure}
	\begin{center}
		\includegraphics[width=14cm]{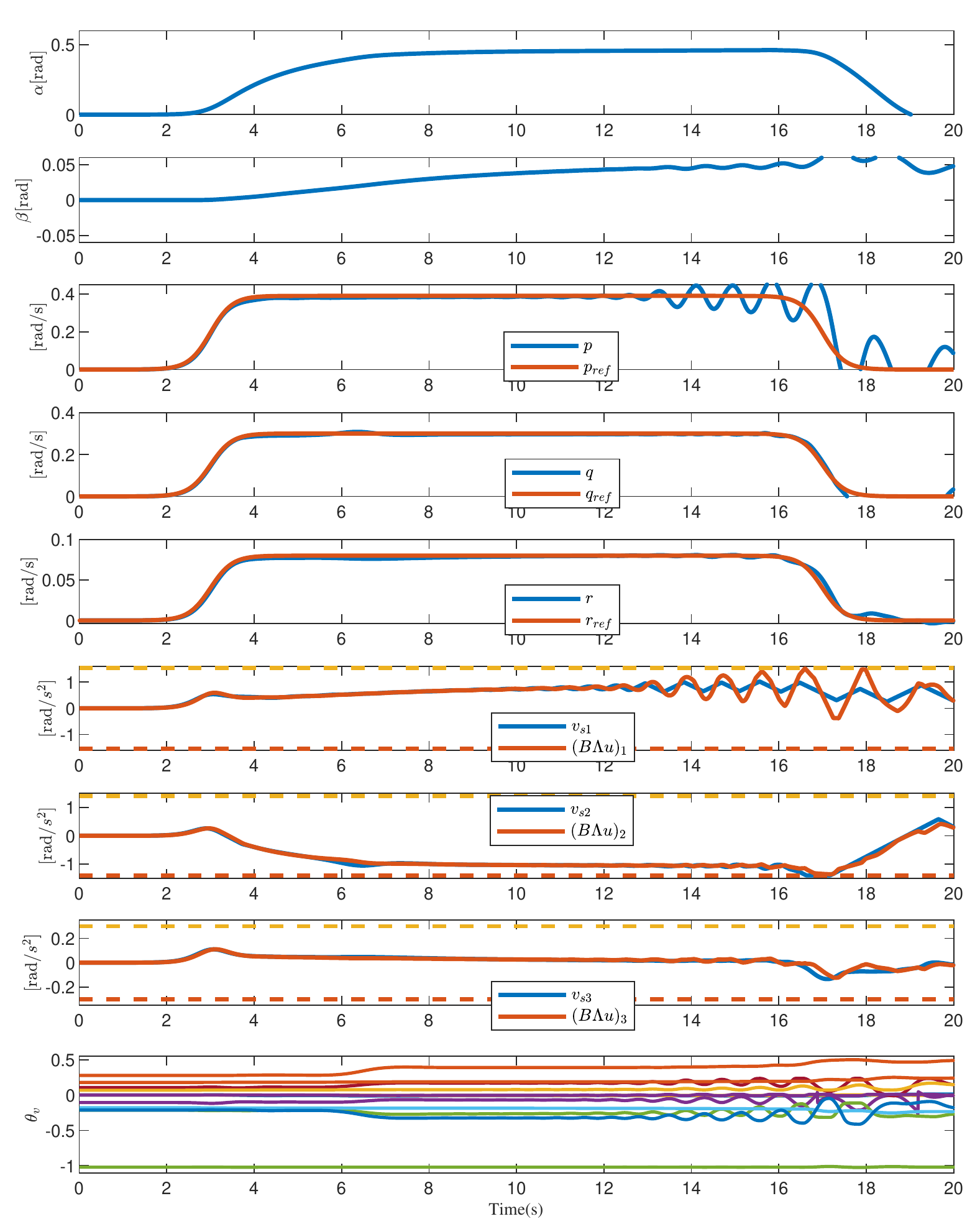}   
		\caption{Case II: Evolution of the states, total control inputs and adaptive parameters in the presence of \textbf{both magnitude and rate saturation}, using the \textbf{conventional projection method}.} 
		\label{fig:sys22}
	\end{center}
\end{figure} 
\begin{figure}
	\begin{center}
		\includegraphics[width=14cm]{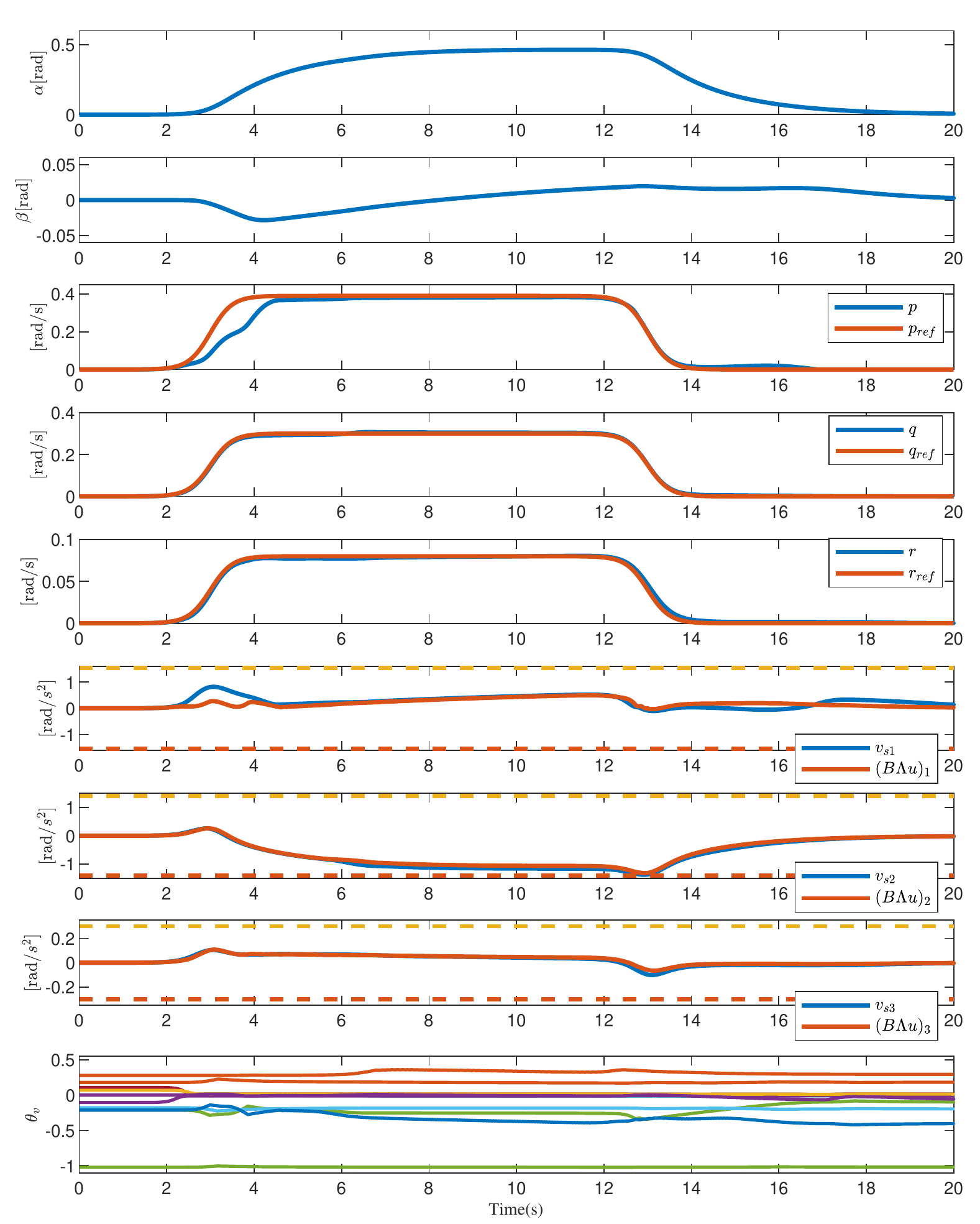}    %
		\caption{Case III: Evolution of the states, total control inputs and adaptive parameters in the presence of \textbf{both magnitude and rate saturation}, using the \textbf{proposed projection algorithm}.} 
		\label{fig:sys33}
	\end{center}
\end{figure}
\begin{figure}
	\begin{center}
		\includegraphics[width=14cm]{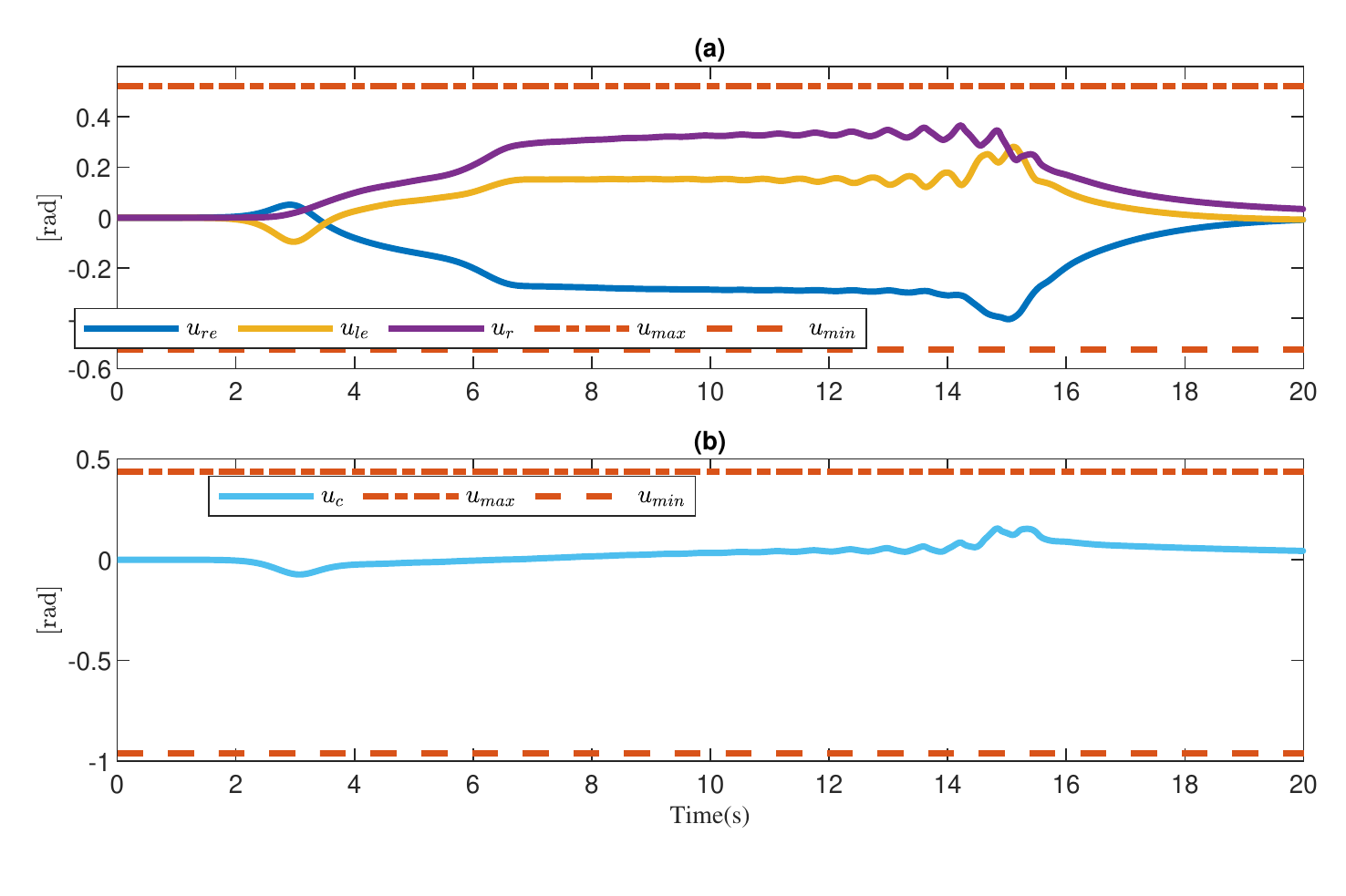}    %
		\caption{Case I: Evolution of the actuator inputs in the presence of \textbf{magnitude saturation}, using the \textbf{conventional projection method}.} 
		\label{fig:u_proj1}
	\end{center}
\end{figure}
\begin{figure}
	\begin{center}
		\includegraphics[width=14cm]{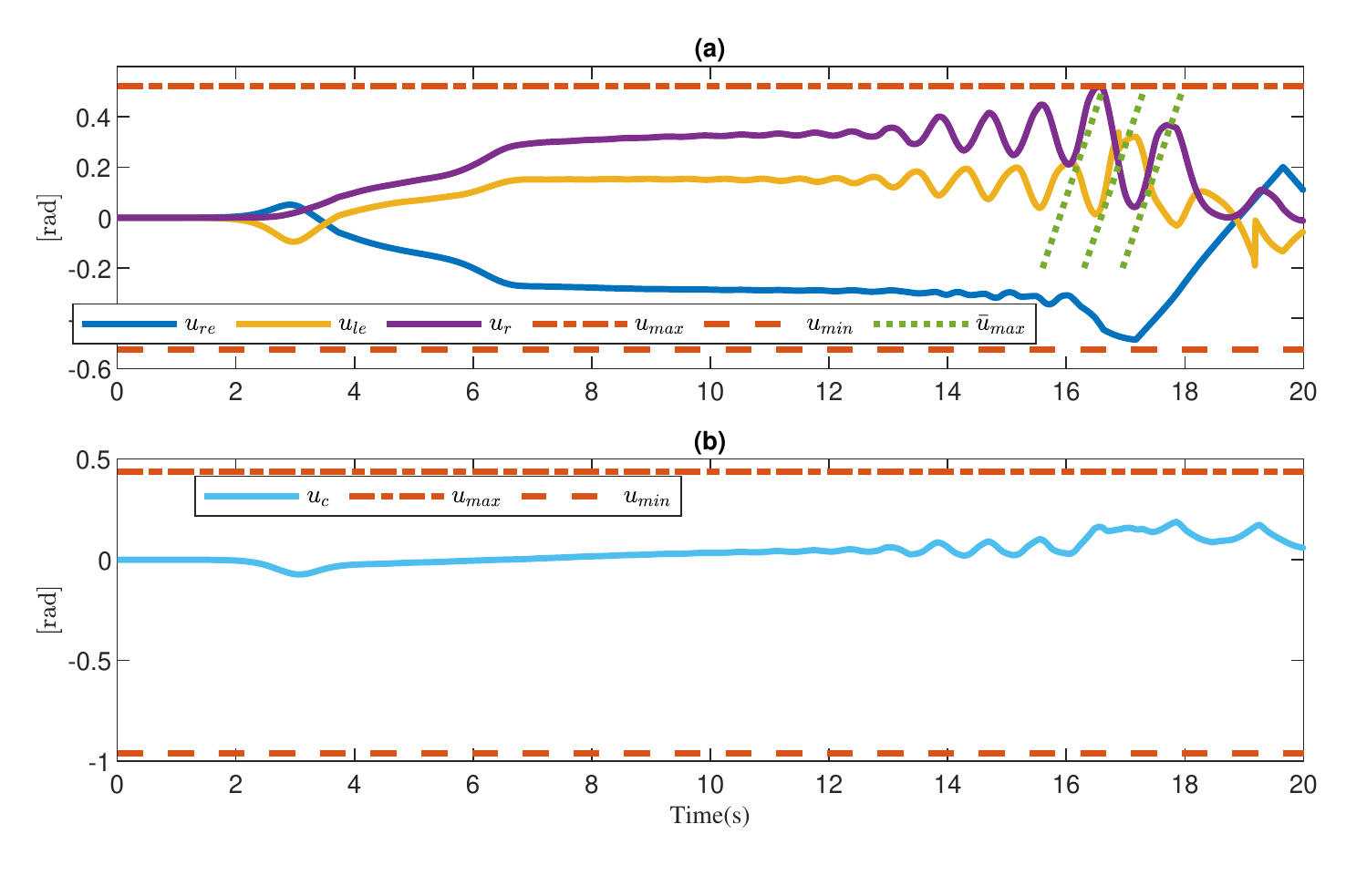}    %
		\caption{Case II: Evolution of the actuator inputs in the presence of \textbf{both magnitude and rate saturation}, using the \textbf{conventional projection method}.} 
		\label{fig:u_proj2}
	\end{center}
\end{figure}
\begin{figure}
	\begin{center}
		\includegraphics[width=14cm]{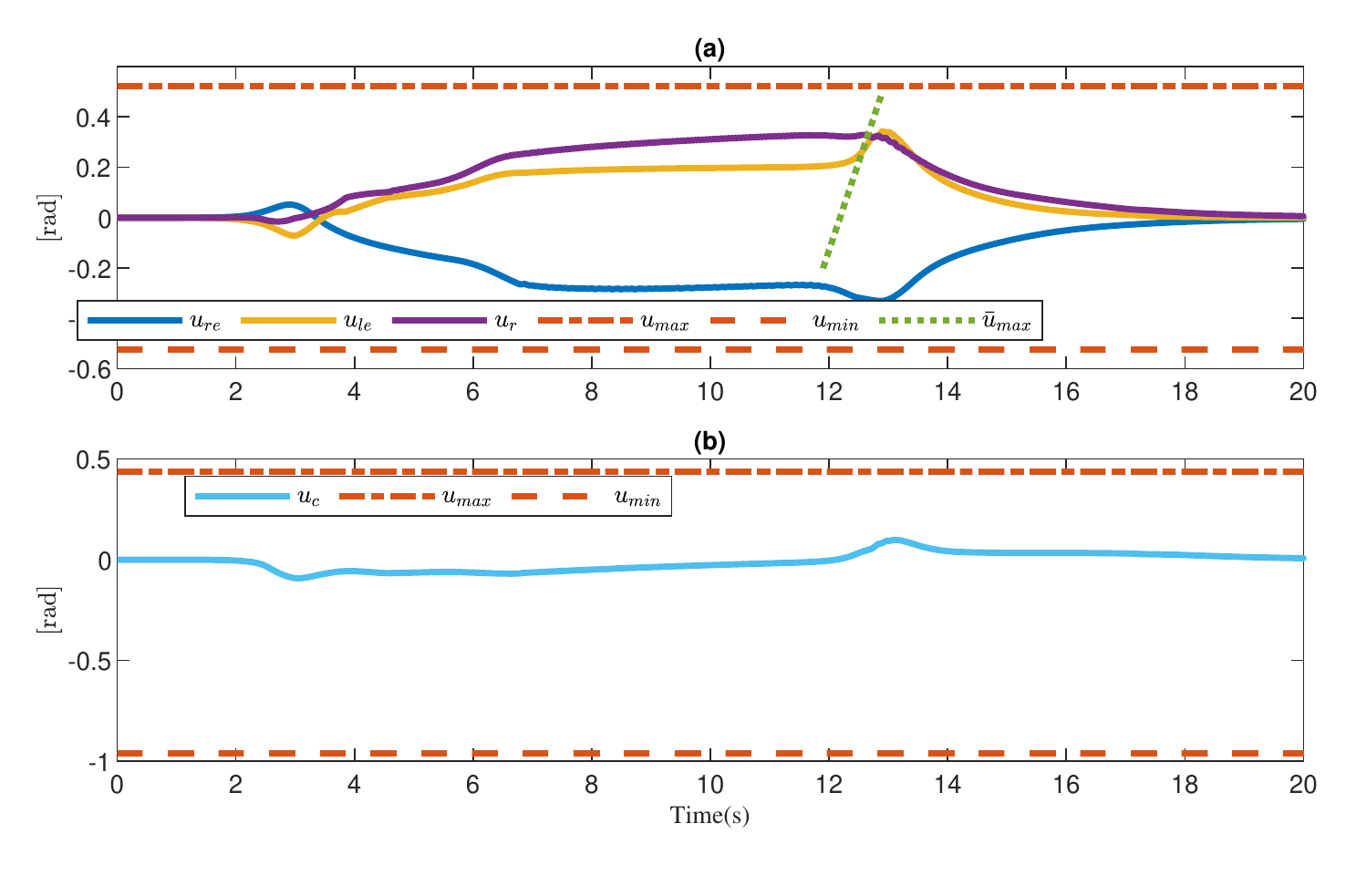}    %
		\caption{Case III: Evolution of the actuator inputs in the presence of \textbf{both magnitude and rate saturation}, using the \textbf{proposed projection algorithm}.} 
		\label{fig:u_proj3}
	\end{center}
\end{figure}

\subsection{Simulation results}
The closed loop control structure depicted in Figure \ref{fig:figure3} is used for the simulations. The reference signal is $ ref=[p_{ref},\ q_{ref},\ r_{ref}]^T $, where $ p_{ref},\ q_{ref} $ and $ r_{ref} $ are the desired roll, pitch and yaw rates, respectively. The effectiveness of the actuators are reduced by $ 30\% $ at $ t=6 $s. 


Three different cases are simulated. Figure \ref{fig:sys11} shows the evolution of the system states, total control input signals, $ v_i,\ i=1, 2, 3 $, and the adaptive parameters, $ \theta_v $,  in the presence of actuator magnitude saturation and conventional projection algorithm (\ref{eq:17n}). It is seen that all the signals are bounded and $ p,\ q $ and $ r $ track their references. Also, the total control input $ v $ is realized reasonably well.

In the second case, actuators are both magnitude and rate limited and again the conventional projection algorithm is used. It is shown in Figure \ref{fig:sys22} that the overall closed loop system shows oscillatory behavior under these conditions. 
 
Finally, in the third case, the proposed projection algorithm is applied in the presence of both magnitude and rate saturation. Figure \ref{fig:sys33} demonstrates the resulting stable and oscillation-free system response. 

The effect of the conventional and the proposed projection algorithms on the actuator input signals are presented separately, in Figures \ref{fig:u_proj1}-\ref{fig:u_proj3}, to emphasize the ability of the latter to limit the signal rates. Figure \ref{fig:u_proj1} shows that the conventional projection algorithm is able to limit the actuator signals within predefined values, when the actuators are only magnitude limited. When actuators are both magnitude and rate limited, the conventional projection algorithm fails to limit the rate of change of actuator signals. This is shown in Figure \ref{fig:u_proj2}, where $ u_{le} $ (yellow line) and $ u_r $ (purple line) increase faster than the rate limit (dashed green line). Finally, Figure \ref{fig:u_proj3} shows that the proposed projection algorithm is capable of limiting both the magnitude and the rate of actuator signals. This can be deduced from the observation that the rate of change of the fastest growing actuator signal, $ u_{le} $ (yellow line), grows still slower than the rate limit (dashed green line). 




\section{Summary}\label{sec7}

A modified projection algorithm that is capable of bounding both the magnitude and rate of change of adaptive parameters is proposed in this paper. This method can be combined with an adaptive control allocator for the control of uncertain over-actuated systems with constrained actuators. The existence and uniqueness of the solutions of the differential equation describing the proposed projection algorithm are shown. Furthermore, properties of the modified projection algorithm that are instrumental for the stability analysis are proven. The performance of the exploited control allocator, in terms of the error bounds, is also guaranteed with the help of the presented projection method. The simulation results with the ADMIRE aircraft model are provided to demonstrate the efficacy of the proposed algorithm.

%
%
%
\section*{Disclosure statement}
No potential conflict of interest was reported by the authors.
%
%
%
\section*{Funding}
This work was supported by the Scientific and Technological Research Council of Turkey under grant number 118E202, and by the Turkish Academy of Sciences Young Scientist Award Program.
%
%
%
%
%
%
%
%
%
%

\bibliographystyle{apacite}
\bibliography{References}

\begin{thebibliography}{}

\bibitem [\protect \citeauthoryear {%
Acosta%
\ \protect \BOthers {.}}{%
Acosta%
\ \protect \BOthers {.}}{%
{\protect \APACyear {2014}}%
}]{%
AcoYil14}
\APACinsertmetastar {%
AcoYil14}%
\begin{APACrefauthors}%
Acosta, D\BPBI M.%
, Yildiz, Y.%
, Craun, R\BPBI W.%
, Beard, S\BPBI D.%
, Leonard, M\BPBI W.%
, Hardy, G\BPBI H.%
\BCBL {}\ \BBA {} Weinstein, M.%
\end{APACrefauthors}%
\unskip\
\newblock
\APACrefYearMonthDay{2014}{}{}.
\newblock
{\BBOQ}\APACrefatitle {Piloted evaluation of a control allocation technique to
  recover from pilot-induced oscillations} {Piloted evaluation of a control
  allocation technique to recover from pilot-induced oscillations}.{\BBCQ}
\newblock
\APACjournalVolNumPages{Journal of Aircraft}{52}{1}{130--140}.
\PrintBackRefs{\CurrentBib}

\bibitem [\protect \citeauthoryear {%
Buffington%
\ \BBA {} Enns%
}{%
Buffington%
\ \BBA {} Enns%
}{%
{\protect \APACyear {1997}}%
}]{%
Buf97}
\APACinsertmetastar {%
Buf97}%
\begin{APACrefauthors}%
Buffington, J\BPBI M.%
\BCBT {}\ \BBA {} Enns, D\BPBI F.%
\end{APACrefauthors}%
\unskip\
\newblock
\APACrefYearMonthDay{1997}{}{}.
\newblock
{\BBOQ}\APACrefatitle {Flight control for mixed-amplitude commands} {Flight
  control for mixed-amplitude commands}.{\BBCQ}
\newblock
\APACjournalVolNumPages{International Journal of Control}{68}{6}{1209--1230}.
\PrintBackRefs{\CurrentBib}

\bibitem [\protect \citeauthoryear {%
Durham%
}{%
Durham%
}{%
{\protect \APACyear {1993}}%
}]{%
Dur93}
\APACinsertmetastar {%
Dur93}%
\begin{APACrefauthors}%
Durham, W\BPBI C.%
\end{APACrefauthors}%
\unskip\
\newblock
\APACrefYearMonthDay{1993}{}{}.
\newblock
{\BBOQ}\APACrefatitle {Constrained control allocation} {Constrained control
  allocation}.{\BBCQ}
\newblock
\APACjournalVolNumPages{Journal of Guidance, control, and
  Dynamics}{16}{4}{717--725}.
\PrintBackRefs{\CurrentBib}

\bibitem [\protect \citeauthoryear {%
Falcon{\'\i}%
\ \BBA {} Holzapfel%
}{%
Falcon{\'\i}%
\ \BBA {} Holzapfel%
}{%
{\protect \APACyear {2016}}%
}]{%
FalHol16}
\APACinsertmetastar {%
FalHol16}%
\begin{APACrefauthors}%
Falcon{\'\i}, G\BPBI P.%
\BCBT {}\ \BBA {} Holzapfel, F.%
\end{APACrefauthors}%
\unskip\
\newblock
\APACrefYearMonthDay{2016}{}{}.
\newblock
{\BBOQ}\APACrefatitle {Adaptive fault tolerant control allocation for a
  hexacopter system} {Adaptive fault tolerant control allocation for a
  hexacopter system}.{\BBCQ}
\newblock
\BIn{} \APACrefbtitle {American Control Conference (ACC), 2016} {American
  control conference (acc), 2016}\ (\BPGS\ 6760--6766).
\PrintBackRefs{\CurrentBib}

\bibitem [\protect \citeauthoryear {%
Gaudio%
, Annaswamy%
, Bolender%
\BCBL {}\ \BBA {} Lavretsky%
}{%
Gaudio%
\ \protect \BOthers {.}}{%
{\protect \APACyear {2019}}%
}]{%
GauAnnas19}
\APACinsertmetastar {%
GauAnnas19}%
\begin{APACrefauthors}%
Gaudio, J\BPBI E.%
, Annaswamy, A\BPBI M.%
, Bolender, M\BPBI A.%
\BCBL {}\ \BBA {} Lavretsky, E.%
\end{APACrefauthors}%
\unskip\
\newblock
\APACrefYearMonthDay{2019}{}{}.
\newblock
{\BBOQ}\APACrefatitle {Adaptive flight control in the presence of limits on
  magnitude and rate} {Adaptive flight control in the presence of limits on
  magnitude and rate}.{\BBCQ}
\newblock
\APACjournalVolNumPages{arXiv preprint arXiv:1907.11913}{}{}{}.
\PrintBackRefs{\CurrentBib}

\bibitem [\protect \citeauthoryear {%
Gruenwald%
, Sarsilmaz%
, Yucelen%
\BCBL {}\ \BBA {} Muse%
}{%
Gruenwald%
\ \protect \BOthers {.}}{%
{\protect \APACyear {2019}}%
}]{%
Gru19}
\APACinsertmetastar {%
Gru19}%
\begin{APACrefauthors}%
Gruenwald, B\BPBI C.%
, Sarsilmaz, S\BPBI B.%
, Yucelen, T.%
\BCBL {}\ \BBA {} Muse, J\BPBI A.%
\end{APACrefauthors}%
\unskip\
\newblock
\APACrefYearMonthDay{2019}{}{}.
\newblock
{\BBOQ}\APACrefatitle {A New Model Reference Adaptive Control Law to Address
  Actuator Amplitude Saturation} {A new model reference adaptive control law to
  address actuator amplitude saturation}.{\BBCQ}
\newblock
\BIn{} \APACrefbtitle {AIAA Scitech 2019 Forum} {Aiaa scitech 2019 forum}\
  (\BPG~1424).
\PrintBackRefs{\CurrentBib}

\bibitem [\protect \citeauthoryear {%
H{\"a}rkeg{\aa}rd%
}{%
H{\"a}rkeg{\aa}rd%
}{%
{\protect \APACyear {2002}}%
}]{%
Har02}
\APACinsertmetastar {%
Har02}%
\begin{APACrefauthors}%
H{\"a}rkeg{\aa}rd, O.%
\end{APACrefauthors}%
\unskip\
\newblock
\APACrefYearMonthDay{2002}{}{}.
\newblock
{\BBOQ}\APACrefatitle {Efficient active set algorithms for solving constrained
  least squares problems in aircraft control allocation} {Efficient active set
  algorithms for solving constrained least squares problems in aircraft control
  allocation}.{\BBCQ}
\newblock
\BIn{} \APACrefbtitle {Decision and Control, 2002, Proceedings of the 41st IEEE
  Conference on} {Decision and control, 2002, proceedings of the 41st ieee
  conference on}\ (\BVOL~2, \BPGS\ 1295--1300).
\PrintBackRefs{\CurrentBib}

\bibitem [\protect \citeauthoryear {%
H{\"a}rkeg{\aa}rd%
\ \BBA {} Glad%
}{%
H{\"a}rkeg{\aa}rd%
\ \BBA {} Glad%
}{%
{\protect \APACyear {2005}}%
}]{%
Har05}
\APACinsertmetastar {%
Har05}%
\begin{APACrefauthors}%
H{\"a}rkeg{\aa}rd, O.%
\BCBT {}\ \BBA {} Glad, S\BPBI T.%
\end{APACrefauthors}%
\unskip\
\newblock
\APACrefYearMonthDay{2005}{}{}.
\newblock
{\BBOQ}\APACrefatitle {Resolving actuator redundancy—optimal control vs.
  control allocation} {Resolving actuator redundancy—optimal control vs.
  control allocation}.{\BBCQ}
\newblock
\APACjournalVolNumPages{Automatica}{41}{1}{137--144}.
\PrintBackRefs{\CurrentBib}

\bibitem [\protect \citeauthoryear {%
Johansen%
\ \BBA {} Fossen%
}{%
Johansen%
\ \BBA {} Fossen%
}{%
{\protect \APACyear {2013}}%
}]{%
JohFos13}
\APACinsertmetastar {%
JohFos13}%
\begin{APACrefauthors}%
Johansen, T\BPBI A.%
\BCBT {}\ \BBA {} Fossen, T\BPBI I.%
\end{APACrefauthors}%
\unskip\
\newblock
\APACrefYearMonthDay{2013}{}{}.
\newblock
{\BBOQ}\APACrefatitle {Control allocation—a survey} {Control allocation—a
  survey}.{\BBCQ}
\newblock
\APACjournalVolNumPages{Automatica}{49}{5}{1087--1103}.
\PrintBackRefs{\CurrentBib}

\bibitem [\protect \citeauthoryear {%
Johansen%
, Fuglseth%
, T{\o}ndel%
\BCBL {}\ \BBA {} Fossen%
}{%
Johansen%
\ \protect \BOthers {.}}{%
{\protect \APACyear {2008}}%
}]{%
Joh08}
\APACinsertmetastar {%
Joh08}%
\begin{APACrefauthors}%
Johansen, T\BPBI A.%
, Fuglseth, T\BPBI P.%
, T{\o}ndel, P.%
\BCBL {}\ \BBA {} Fossen, T\BPBI I.%
\end{APACrefauthors}%
\unskip\
\newblock
\APACrefYearMonthDay{2008}{}{}.
\newblock
{\BBOQ}\APACrefatitle {Optimal constrained control allocation in marine surface
  vessels with rudders} {Optimal constrained control allocation in marine
  surface vessels with rudders}.{\BBCQ}
\newblock
\APACjournalVolNumPages{Control Engineering Practice}{16}{4}{457--464}.
\PrintBackRefs{\CurrentBib}

\bibitem [\protect \citeauthoryear {%
Karason%
\ \BBA {} Annaswamy%
}{%
Karason%
\ \BBA {} Annaswamy%
}{%
{\protect \APACyear {1993}}%
}]{%
KarAnnas93}
\APACinsertmetastar {%
KarAnnas93}%
\begin{APACrefauthors}%
Karason, S\BPBI P.%
\BCBT {}\ \BBA {} Annaswamy, A\BPBI M.%
\end{APACrefauthors}%
\unskip\
\newblock
\APACrefYearMonthDay{1993}{}{}.
\newblock
{\BBOQ}\APACrefatitle {Adaptive control in the presence of input constraints}
  {Adaptive control in the presence of input constraints}.{\BBCQ}
\newblock
\BIn{} \APACrefbtitle {1993 American Control Conference} {1993 american control
  conference}\ (\BPGS\ 1370--1374).
\PrintBackRefs{\CurrentBib}

\bibitem [\protect \citeauthoryear {%
Khalil%
}{%
Khalil%
}{%
{\protect \APACyear {2002}}%
}]{%
Khal02}
\APACinsertmetastar {%
Khal02}%
\begin{APACrefauthors}%
Khalil, H\BPBI K.%
\end{APACrefauthors}%
\unskip\
\newblock
\APACrefYearMonthDay{2002}{}{}.
\newblock
{\BBOQ}\APACrefatitle {Nonlinear systems} {Nonlinear systems}.{\BBCQ}
\newblock

\PrintBackRefs{\CurrentBib}

\bibitem [\protect \citeauthoryear {%
Lavretsky%
\ \BBA {} Hovakimyan%
}{%
Lavretsky%
\ \BBA {} Hovakimyan%
}{%
{\protect \APACyear {2007}}%
{\protect \APACexlab {{\protect \BCnt {1}}}}}]{%
LavHov07a}
\APACinsertmetastar {%
LavHov07a}%
\begin{APACrefauthors}%
Lavretsky, E.%
\BCBT {}\ \BBA {} Hovakimyan, N.%
\end{APACrefauthors}%
\unskip\
\newblock
\APACrefYearMonthDay{2007{\protect \BCnt {1}}}{}{}.
\newblock
{\BBOQ}\APACrefatitle {Stable adaptation in the presence of actuator
  constraints with flight control applications} {Stable adaptation in the
  presence of actuator constraints with flight control applications}.{\BBCQ}
\newblock
\APACjournalVolNumPages{Journal of guidance, control, and
  dynamics}{30}{2}{337--345}.
\PrintBackRefs{\CurrentBib}

\bibitem [\protect \citeauthoryear {%
Lavretsky%
\ \BBA {} Hovakimyan%
}{%
Lavretsky%
\ \BBA {} Hovakimyan%
}{%
{\protect \APACyear {2007}}%
{\protect \APACexlab {{\protect \BCnt {2}}}}}]{%
LavHov07b}
\APACinsertmetastar {%
LavHov07b}%
\begin{APACrefauthors}%
Lavretsky, E.%
\BCBT {}\ \BBA {} Hovakimyan, N.%
\end{APACrefauthors}%
\unskip\
\newblock
\APACrefYearMonthDay{2007{\protect \BCnt {2}}}{}{}.
\newblock
{\BBOQ}\APACrefatitle {Stable adaptation in the presence of input constraints}
  {Stable adaptation in the presence of input constraints}.{\BBCQ}
\newblock
\APACjournalVolNumPages{Systems \& control letters}{56}{11-12}{722--729}.
\PrintBackRefs{\CurrentBib}

\bibitem [\protect \citeauthoryear {%
Lavretsky%
\ \BBA {} Wise%
}{%
Lavretsky%
\ \BBA {} Wise%
}{%
{\protect \APACyear {2013}}%
}]{%
Lav13}
\APACinsertmetastar {%
Lav13}%
\begin{APACrefauthors}%
Lavretsky, E.%
\BCBT {}\ \BBA {} Wise, K.%
\end{APACrefauthors}%
\unskip\
\newblock
\APACrefYearMonthDay{2013}{}{}.
\newblock
\APACrefbtitle {Robust and Adaptive Control: With Aerospace Applications.}
  {Robust and adaptive control: With aerospace applications.}
\PrintBackRefs{\CurrentBib}

\bibitem [\protect \citeauthoryear {%
Leonessa%
, Haddad%
, Hayakawa%
\BCBL {}\ \BBA {} Morel%
}{%
Leonessa%
\ \protect \BOthers {.}}{%
{\protect \APACyear {2009}}%
}]{%
LeonHad09}
\APACinsertmetastar {%
LeonHad09}%
\begin{APACrefauthors}%
Leonessa, A.%
, Haddad, W\BPBI M.%
, Hayakawa, T.%
\BCBL {}\ \BBA {} Morel, Y.%
\end{APACrefauthors}%
\unskip\
\newblock
\APACrefYearMonthDay{2009}{}{}.
\newblock
{\BBOQ}\APACrefatitle {Adaptive control for nonlinear uncertain systems with
  actuator amplitude and rate saturation constraints} {Adaptive control for
  nonlinear uncertain systems with actuator amplitude and rate saturation
  constraints}.{\BBCQ}
\newblock
\APACjournalVolNumPages{International Journal of Adaptive Control and Signal
  Processing}{23}{1}{73--96}.
\PrintBackRefs{\CurrentBib}

\bibitem [\protect \citeauthoryear {%
McRuer%
}{%
McRuer%
}{%
{\protect \APACyear {1995}}%
}]{%
Mcr95}
\APACinsertmetastar {%
Mcr95}%
\begin{APACrefauthors}%
McRuer, D\BPBI T.%
\end{APACrefauthors}%
\unskip\
\newblock
\APACrefYearMonthDay{1995}{}{}.
\newblock
{\BBOQ}\APACrefatitle {Pilot-induced oscillations and human dynamic behavior}
  {Pilot-induced oscillations and human dynamic behavior}.{\BBCQ}
\newblock

\PrintBackRefs{\CurrentBib}

\bibitem [\protect \citeauthoryear {%
Molnar%
, Omerdic%
\BCBL {}\ \BBA {} Toal%
}{%
Molnar%
\ \protect \BOthers {.}}{%
{\protect \APACyear {2007}}%
}]{%
Omer07}
\APACinsertmetastar {%
Omer07}%
\begin{APACrefauthors}%
Molnar, L.%
, Omerdic, E.%
\BCBL {}\ \BBA {} Toal, D.%
\end{APACrefauthors}%
\unskip\
\newblock
\APACrefYearMonthDay{2007}{}{}.
\newblock
{\BBOQ}\APACrefatitle {Guidance, navigation and control system for the Tethra
  unmanned underwater vehicle} {Guidance, navigation and control system for the
  tethra unmanned underwater vehicle}.{\BBCQ}
\newblock
\APACjournalVolNumPages{International Journal of Control}{80}{7}{1050--1076}.
\PrintBackRefs{\CurrentBib}

\bibitem [\protect \citeauthoryear {%
Naderi%
, Sedigh%
\BCBL {}\ \BBA {} Johansen%
}{%
Naderi%
\ \protect \BOthers {.}}{%
{\protect \APACyear {2019}}%
}]{%
NadSed19}
\APACinsertmetastar {%
NadSed19}%
\begin{APACrefauthors}%
Naderi, M.%
, Sedigh, A\BPBI K.%
\BCBL {}\ \BBA {} Johansen, T\BPBI A.%
\end{APACrefauthors}%
\unskip\
\newblock
\APACrefYearMonthDay{2019}{}{}.
\newblock
{\BBOQ}\APACrefatitle {Guaranteed feasible control allocation using model
  predictive control} {Guaranteed feasible control allocation using model
  predictive control}.{\BBCQ}
\newblock
\APACjournalVolNumPages{Control Theory and Technology}{17}{3}{252--264}.
\PrintBackRefs{\CurrentBib}

\bibitem [\protect \citeauthoryear {%
Narendra%
\ \BBA {} Annaswamy%
}{%
Narendra%
\ \BBA {} Annaswamy%
}{%
{\protect \APACyear {2012}}%
}]{%
NarAnn12}
\APACinsertmetastar {%
NarAnn12}%
\begin{APACrefauthors}%
Narendra, K\BPBI S.%
\BCBT {}\ \BBA {} Annaswamy, A\BPBI M.%
\end{APACrefauthors}%
\unskip\
\newblock
\APACrefYear{2012}.
\newblock
\APACrefbtitle {Stable adaptive systems} {Stable adaptive systems}.
\newblock
\APACaddressPublisher{}{Courier Corporation}.
\PrintBackRefs{\CurrentBib}

\bibitem [\protect \citeauthoryear {%
Naskar%
, Patra%
\BCBL {}\ \BBA {} Sen%
}{%
Naskar%
\ \protect \BOthers {.}}{%
{\protect \APACyear {2017}}%
}]{%
Nas17}
\APACinsertmetastar {%
Nas17}%
\begin{APACrefauthors}%
Naskar, A\BPBI K.%
, Patra, S.%
\BCBL {}\ \BBA {} Sen, S.%
\end{APACrefauthors}%
\unskip\
\newblock
\APACrefYearMonthDay{2017}{}{}.
\newblock
{\BBOQ}\APACrefatitle {New control allocation algorithms in fixed point
  framework for overactuated systems with actuator saturation} {New control
  allocation algorithms in fixed point framework for overactuated systems with
  actuator saturation}.{\BBCQ}
\newblock
\APACjournalVolNumPages{International Journal of control}{90}{2}{348--356}.
\PrintBackRefs{\CurrentBib}

\bibitem [\protect \citeauthoryear {%
Petersen%
\ \BBA {} Bodson%
}{%
Petersen%
\ \BBA {} Bodson%
}{%
{\protect \APACyear {2006}}%
}]{%
PetBod06}
\APACinsertmetastar {%
PetBod06}%
\begin{APACrefauthors}%
Petersen, J\BPBI A\BPBI M.%
\BCBT {}\ \BBA {} Bodson, M.%
\end{APACrefauthors}%
\unskip\
\newblock
\APACrefYearMonthDay{2006}{}{}.
\newblock
{\BBOQ}\APACrefatitle {Constrained quadratic programming techniques for control
  allocation} {Constrained quadratic programming techniques for control
  allocation}.{\BBCQ}
\newblock
\APACjournalVolNumPages{IEEE Transactions on Control Systems
  Technology}{14}{1}{91--98}.
\PrintBackRefs{\CurrentBib}

\bibitem [\protect \citeauthoryear {%
Praly%
, Bastin%
, Pomet%
\BCBL {}\ \BBA {} Jiang%
}{%
Praly%
\ \protect \BOthers {.}}{%
{\protect \APACyear {1991}}%
}]{%
Pra91}
\APACinsertmetastar {%
Pra91}%
\begin{APACrefauthors}%
Praly, L.%
, Bastin, G.%
, Pomet, J\BHBI B.%
\BCBL {}\ \BBA {} Jiang, Z\BHBI P.%
\end{APACrefauthors}%
\unskip\
\newblock
\APACrefYearMonthDay{1991}{}{}.
\newblock
{\BBOQ}\APACrefatitle {Adaptive stabilization of nonlinear systems} {Adaptive
  stabilization of nonlinear systems}.{\BBCQ}
\newblock
\BIn{} \APACrefbtitle {Foundations of adaptive control} {Foundations of
  adaptive control}\ (\BPGS\ 347--433).
\newblock
\APACaddressPublisher{}{Springer}.
\PrintBackRefs{\CurrentBib}

\bibitem [\protect \citeauthoryear {%
Queinnec%
, Tarbouriech%
, Biannic%
\BCBL {}\ \BBA {} Prieur%
}{%
Queinnec%
\ \protect \BOthers {.}}{%
{\protect \APACyear {2017}}%
}]{%
QueTar17}
\APACinsertmetastar {%
QueTar17}%
\begin{APACrefauthors}%
Queinnec, I.%
, Tarbouriech, S.%
, Biannic, J\BHBI M.%
\BCBL {}\ \BBA {} Prieur, C.%
\end{APACrefauthors}%
\unskip\
\newblock
\APACrefYearMonthDay{2017}{}{}.
\newblock
{\BBOQ}\APACrefatitle {Anti-Windup Algorithms for Pilot-Induced-Oscillation
  Alleviation} {Anti-windup algorithms for pilot-induced-oscillation
  alleviation}.{\BBCQ}
\newblock

\PrintBackRefs{\CurrentBib}

\bibitem [\protect \citeauthoryear {%
Safa%
, Baradarannia%
, Kharrati%
\BCBL {}\ \BBA {} Khanmohammadi%
}{%
Safa%
\ \protect \BOthers {.}}{%
{\protect \APACyear {2019}}%
}]{%
Safa19}
\APACinsertmetastar {%
Safa19}%
\begin{APACrefauthors}%
Safa, A.%
, Baradarannia, M.%
, Kharrati, H.%
\BCBL {}\ \BBA {} Khanmohammadi, S.%
\end{APACrefauthors}%
\unskip\
\newblock
\APACrefYearMonthDay{2019}{}{}.
\newblock
{\BBOQ}\APACrefatitle {Robust attitude tracking control for a rigid spacecraft
  under input delays and actuator errors} {Robust attitude tracking control for
  a rigid spacecraft under input delays and actuator errors}.{\BBCQ}
\newblock
\APACjournalVolNumPages{International Journal of Control}{92}{5}{1183--1195}.
\PrintBackRefs{\CurrentBib}

\bibitem [\protect \citeauthoryear {%
Tarbouriech%
, Garcia%
, da Silva~Jr%
\BCBL {}\ \BBA {} Queinnec%
}{%
Tarbouriech%
\ \protect \BOthers {.}}{%
{\protect \APACyear {2011}}%
}]{%
TarGar11}
\APACinsertmetastar {%
TarGar11}%
\begin{APACrefauthors}%
Tarbouriech, S.%
, Garcia, G.%
, da Silva~Jr, J\BPBI M\BPBI G.%
\BCBL {}\ \BBA {} Queinnec, I.%
\end{APACrefauthors}%
\unskip\
\newblock
\APACrefYear{2011}.
\newblock
\APACrefbtitle {Stability and stabilization of linear systems with saturating
  actuators} {Stability and stabilization of linear systems with saturating
  actuators}.
\newblock
\APACaddressPublisher{}{Springer Science \& Business Media}.
\PrintBackRefs{\CurrentBib}

\bibitem [\protect \citeauthoryear {%
Tj{\o}nn{\aa}s%
\ \BBA {} Johansen%
}{%
Tj{\o}nn{\aa}s%
\ \BBA {} Johansen%
}{%
{\protect \APACyear {2008}}%
}]{%
TjoJoh08}
\APACinsertmetastar {%
TjoJoh08}%
\begin{APACrefauthors}%
Tj{\o}nn{\aa}s, J.%
\BCBT {}\ \BBA {} Johansen, T\BPBI A.%
\end{APACrefauthors}%
\unskip\
\newblock
\APACrefYearMonthDay{2008}{}{}.
\newblock
{\BBOQ}\APACrefatitle {Adaptive control allocation} {Adaptive control
  allocation}.{\BBCQ}
\newblock
\APACjournalVolNumPages{Automatica}{44}{11}{2754--2765}.
\PrintBackRefs{\CurrentBib}

\bibitem [\protect \citeauthoryear {%
Tohidi%
, Khaki~Sedigh%
\BCBL {}\ \BBA {} Buzorgnia%
}{%
Tohidi%
, Khaki~Sedigh%
\BCBL {}\ \BBA {} Buzorgnia%
}{%
{\protect \APACyear {2016}}%
}]{%
Toh16}
\APACinsertmetastar {%
Toh16}%
\begin{APACrefauthors}%
Tohidi, S\BPBI S.%
, Khaki~Sedigh, A.%
\BCBL {}\ \BBA {} Buzorgnia, D.%
\end{APACrefauthors}%
\unskip\
\newblock
\APACrefYearMonthDay{2016}{}{}.
\newblock
{\BBOQ}\APACrefatitle {Fault tolerant control design using adaptive control
  allocation based on the pseudo inverse along the null space} {Fault tolerant
  control design using adaptive control allocation based on the pseudo inverse
  along the null space}.{\BBCQ}
\newblock
\APACjournalVolNumPages{International Journal of Robust and Nonlinear
  Control}{26}{16}{3541--3557}.
\PrintBackRefs{\CurrentBib}

\bibitem [\protect \citeauthoryear {%
Tohidi%
, Yildiz%
\BCBL {}\ \BBA {} Kolmanovsky%
}{%
Tohidi%
, Yildiz%
\BCBL {}\ \BBA {} Kolmanovsky%
}{%
{\protect \APACyear {2016}}%
}]{%
TohYil16}
\APACinsertmetastar {%
TohYil16}%
\begin{APACrefauthors}%
Tohidi, S\BPBI S.%
, Yildiz, Y.%
\BCBL {}\ \BBA {} Kolmanovsky, I.%
\end{APACrefauthors}%
\unskip\
\newblock
\APACrefYearMonthDay{2016}{}{}.
\newblock
{\BBOQ}\APACrefatitle {Fault tolerant control for over-actuated systems: an
  adaptive correction approach} {Fault tolerant control for over-actuated
  systems: an adaptive correction approach}.{\BBCQ}
\newblock
\BIn{} \APACrefbtitle {American Control Conference (ACC), 2016} {American
  control conference (acc), 2016}\ (\BPGS\ 2530--2535).
\PrintBackRefs{\CurrentBib}

\bibitem [\protect \citeauthoryear {%
Tohidi%
, Yildiz%
\BCBL {}\ \BBA {} Kolmanovsky%
}{%
Tohidi%
\ \protect \BOthers {.}}{%
{\protect \APACyear {2017}}%
}]{%
TohYil17}
\APACinsertmetastar {%
TohYil17}%
\begin{APACrefauthors}%
Tohidi, S\BPBI S.%
, Yildiz, Y.%
\BCBL {}\ \BBA {} Kolmanovsky, I.%
\end{APACrefauthors}%
\unskip\
\newblock
\APACrefYearMonthDay{2017}{}{}.
\newblock
{\BBOQ}\APACrefatitle {Adaptive control allocation for over-actuated systems
  with actuator saturation} {Adaptive control allocation for over-actuated
  systems with actuator saturation}.{\BBCQ}
\newblock
\APACjournalVolNumPages{IFAC-PapersOnLine}{50}{1}{5492--5497}.
\PrintBackRefs{\CurrentBib}

\bibitem [\protect \citeauthoryear {%
Tohidi%
, Yildiz%
\BCBL {}\ \BBA {} Kolmanovsky%
}{%
Tohidi%
\ \protect \BOthers {.}}{%
{\protect \APACyear {2018}}%
}]{%
TohYil18}
\APACinsertmetastar {%
TohYil18}%
\begin{APACrefauthors}%
Tohidi, S\BPBI S.%
, Yildiz, Y.%
\BCBL {}\ \BBA {} Kolmanovsky, I.%
\end{APACrefauthors}%
\unskip\
\newblock
\APACrefYearMonthDay{2018}{}{}.
\newblock
{\BBOQ}\APACrefatitle {Pilot induced oscillation mitigation for unmanned
  aircraft systems: An adaptive control allocation approach} {Pilot induced
  oscillation mitigation for unmanned aircraft systems: An adaptive control
  allocation approach}.{\BBCQ}
\newblock
\BIn{} \APACrefbtitle {2018 IEEE Conference on Control Technology and
  Applications (CCTA)} {2018 ieee conference on control technology and
  applications (ccta)}\ (\BPGS\ 343--348).
\PrintBackRefs{\CurrentBib}

\bibitem [\protect \citeauthoryear {%
Tohidi%
, Yildiz%
\BCBL {}\ \BBA {} Kolmanovsky%
}{%
Tohidi%
\ \protect \BOthers {.}}{%
{\protect \APACyear {2019}}%
}]{%
TohYil19}
\APACinsertmetastar {%
TohYil19}%
\begin{APACrefauthors}%
Tohidi, S\BPBI S.%
, Yildiz, Y.%
\BCBL {}\ \BBA {} Kolmanovsky, I.%
\end{APACrefauthors}%
\unskip\
\newblock
\APACrefYearMonthDay{2019}{}{}.
\newblock
{\BBOQ}\APACrefatitle {Model reference adaptive control allocation for
  constrained systems with guaranteed closed loop stability} {Model reference
  adaptive control allocation for constrained systems with guaranteed closed
  loop stability}.{\BBCQ}
\newblock
\APACjournalVolNumPages{arXiv preprint arXiv:1909.10036}{}{}{}.
\PrintBackRefs{\CurrentBib}

\bibitem [\protect \citeauthoryear {%
Tohidi%
, Yildiz%
\BCBL {}\ \BBA {} Kolmanovsky%
}{%
Tohidi%
\ \protect \BOthers {.}}{%
{\protect \APACyear {2020}}%
}]{%
TohYil20}
\APACinsertmetastar {%
TohYil20}%
\begin{APACrefauthors}%
Tohidi, S\BPBI S.%
, Yildiz, Y.%
\BCBL {}\ \BBA {} Kolmanovsky, I.%
\end{APACrefauthors}%
\unskip\
\newblock
\APACrefYearMonthDay{2020}{}{}.
\newblock
{\BBOQ}\APACrefatitle {Adaptive control allocation for constrained systems}
  {Adaptive control allocation for constrained systems}.{\BBCQ}
\newblock
\APACjournalVolNumPages{Automatica}{121}{}{109161}.
\PrintBackRefs{\CurrentBib}

\bibitem [\protect \citeauthoryear {%
Yildiz%
\ \BBA {} Kolmanovsky%
}{%
Yildiz%
\ \BBA {} Kolmanovsky%
}{%
{\protect \APACyear {2011}}%
{\protect \APACexlab {{\protect \BCnt {1}}}}}]{%
Yil11b}
\APACinsertmetastar {%
Yil11b}%
\begin{APACrefauthors}%
Yildiz, Y.%
\BCBT {}\ \BBA {} Kolmanovsky, I.%
\end{APACrefauthors}%
\unskip\
\newblock
\APACrefYearMonthDay{2011{\protect \BCnt {1}}}{}{}.
\newblock
{\BBOQ}\APACrefatitle {Implementation of capio for composite adaptive control
  of cross-coupled unstable aircraft} {Implementation of capio for composite
  adaptive control of cross-coupled unstable aircraft}.{\BBCQ}
\newblock
\BIn{} \APACrefbtitle {Infotech@ Aerospace 2011} {Infotech@ aerospace 2011}\
  (\BPG~1460).
\PrintBackRefs{\CurrentBib}

\bibitem [\protect \citeauthoryear {%
Yildiz%
\ \BBA {} Kolmanovsky%
}{%
Yildiz%
\ \BBA {} Kolmanovsky%
}{%
{\protect \APACyear {2011}}%
{\protect \APACexlab {{\protect \BCnt {2}}}}}]{%
Yil11a}
\APACinsertmetastar {%
Yil11a}%
\begin{APACrefauthors}%
Yildiz, Y.%
\BCBT {}\ \BBA {} Kolmanovsky, I.%
\end{APACrefauthors}%
\unskip\
\newblock
\APACrefYearMonthDay{2011{\protect \BCnt {2}}}{}{}.
\newblock
{\BBOQ}\APACrefatitle {Stability properties and cross-coupling performance of
  the control allocation scheme {CAPIO}} {Stability properties and
  cross-coupling performance of the control allocation scheme {CAPIO}}.{\BBCQ}
\newblock
\APACjournalVolNumPages{Journal of Guidance, Control, and
  Dynamics}{34}{4}{1190--1196}.
\PrintBackRefs{\CurrentBib}

\bibitem [\protect \citeauthoryear {%
Yildiz%
\ \BBA {} Kolmanovsky%
}{%
Yildiz%
\ \BBA {} Kolmanovsky%
}{%
{\protect \APACyear {2010}}%
}]{%
YilKol10}
\APACinsertmetastar {%
YilKol10}%
\begin{APACrefauthors}%
Yildiz, Y.%
\BCBT {}\ \BBA {} Kolmanovsky, I\BPBI V.%
\end{APACrefauthors}%
\unskip\
\newblock
\APACrefYearMonthDay{2010}{}{}.
\newblock
{\BBOQ}\APACrefatitle {A control allocation technique to recover from
  pilot-induced oscillations (CAPIO) due to actuator rate limiting} {A control
  allocation technique to recover from pilot-induced oscillations (capio) due
  to actuator rate limiting}.{\BBCQ}
\newblock
\BIn{} \APACrefbtitle {Proceedings of the 2010 American Control Conference}
  {Proceedings of the 2010 american control conference}\ (\BPGS\ 516--523).
\PrintBackRefs{\CurrentBib}

\bibitem [\protect \citeauthoryear {%
Yildiz%
, Kolmanovsky%
\BCBL {}\ \BBA {} Acosta%
}{%
Yildiz%
\ \protect \BOthers {.}}{%
{\protect \APACyear {2011}}%
}]{%
YilKolAco11}
\APACinsertmetastar {%
YilKolAco11}%
\begin{APACrefauthors}%
Yildiz, Y.%
, Kolmanovsky, I\BPBI V.%
\BCBL {}\ \BBA {} Acosta, D.%
\end{APACrefauthors}%
\unskip\
\newblock
\APACrefYearMonthDay{2011}{}{}.
\newblock
{\BBOQ}\APACrefatitle {A control allocation system for automatic detection and
  compensation of phase shift due to actuator rate limiting} {A control
  allocation system for automatic detection and compensation of phase shift due
  to actuator rate limiting}.{\BBCQ}
\newblock
\BIn{} \APACrefbtitle {Proceedings of the 2011 American Control Conference}
  {Proceedings of the 2011 american control conference}\ (\BPGS\ 444--449).
\PrintBackRefs{\CurrentBib}

\bibitem [\protect \citeauthoryear {%
Yong%
\ \BBA {} Frazzoli%
}{%
Yong%
\ \BBA {} Frazzoli%
}{%
{\protect \APACyear {2014}}%
}]{%
YonFra14}
\APACinsertmetastar {%
YonFra14}%
\begin{APACrefauthors}%
Yong, S\BPBI Z.%
\BCBT {}\ \BBA {} Frazzoli, E.%
\end{APACrefauthors}%
\unskip\
\newblock
\APACrefYearMonthDay{2014}{}{}.
\newblock
{\BBOQ}\APACrefatitle {Asymptotic adaptive tracking with input amplitude and
  rate constraints and bounded disturbances} {Asymptotic adaptive tracking with
  input amplitude and rate constraints and bounded disturbances}.{\BBCQ}
\newblock
\BIn{} \APACrefbtitle {53rd IEEE Conference on Decision and Control} {53rd ieee
  conference on decision and control}\ (\BPGS\ 1256--1263).
\PrintBackRefs{\CurrentBib}

\end{thebibliography}

\end{document}